\DeclareMathAlphabet\mathbfcal{OMS}{cmsy}{b}{n}
\newtheorem{theorem}{Theorem}
\newtheorem{problem}{Problem}
\newtheorem{corollary}{Corollary}
\newtheorem*{proof of Theorem*}{Proof of Theorem 3}
\newtheorem{proof of Lemma}{Proof of Lemma}
\newtheorem{definition}{Definition}
\newtheorem{lemma}{Lemma}
\newtheorem{remark}{Remark}
\newcommand{\enp} {\hfill \rule{2.2mm}{2.6mm}}
\begin{document}

%\title{Optimizing the Service Policy of a Mobile Access Point with Renewable Energy through Online Solutions to the 0/1 Knapsack Problem with Dynamic Capacity}

%\title{An Autonomous Wireless Access Point On The Move With Renewable Energy}
\title{Optimal Energy Allocation Policies for a High Altitude Flying Wireless Access Point}

\author{\IEEEauthorblockN{Elif Tugce Ceran, Tugce Erkilic, Elif Uysal-Biyikoglu, Tolga Girici and Kemal Leblebicioglu}
\IEEEauthorblockA{E-mail: e.ceran14@imperial.ac.uk
, terkilic@aselsan.com.tr, uelif@metu.edu.tr, tgirici@etu.edu.tr, kleb@metu.edu.tr}}

%\author{<Elif Tugce Ceran\affil{1}\corrauth\,
% Tugce Erkilic\affil{2},Elif Uysal-Biyikoglu\affil{3},Tolga Girici\affil{4} and Kemal Leblebicioglu\affil{2}>}
%\address{<\affilnum{1}Imperial College London
%\affilnum{2}Middle East Technical University\affilnum{3}  Massachusetts Institute of Technology \affilnum{4}TOBB University of Economy and Technology>}

%\corraddr{<Imperial College London SW7 2AZ>. E-mail: <e.ceran14@imperial.ac.uk>}

%\bibliographystyle{IEEEtran}

\bibliographystyle{wileyj}

%\keywords{one,two}

\maketitle

\def\eg{\emph{e.g.}}
\def\ie{\emph{i.e.}}

\begin{abstract}
 Inspired by recent industrial efforts toward high altitude flying wireless access points powered by renewable energy, an online resource allocation problem for a mobile access point (AP) travelling at high altitude is formulated. The AP allocates its resources (available energy) to maximize the total utility (reward) provided to a sequentially observed set of users demanding service. The problem is formulated as a 0/1 dynamic knapsack problem with incremental capacity over a finite time horizon, the solution of which is quite open in the literature.  We address the problem through deterministic and stochastic formulations. For the deterministic problem, several online approximations are proposed based on an instantaneous threshold that can adapt to short-time-scale dynamics. For the stochastic model, after showing the optimality of a threshold based solution on a dynamic programming (DP) formulation, an approximate threshold based policy is obtained. The performances of proposed policies are compared with that of the optimal solution obtained through DP. \footnote[1]{This is an extended version of the paper ``Optimizing the service policy of a wireless access point on the move with renewable energy" appearing in Proc. of 52nd Annual Allerton Conference on Communication, Control, and Computing (Allerton), Monticello Illinois, pp.967-974, Sept. 30 2014-Oct. 3 2014.} 
\end{abstract} 

\section{Introduction}

The possibility of providing ubiquitous Internet access by filling coverage gaps in rural and remote areas devoid of ground based Infrastructure through deployment of mobile access points (AP) in the Earth’s atmosphere (e.g. floating at 20 km altitude,  the lower stratosphere, see, for example \cite{balloon2014, Loonexample,Droneexample}) has spurred recent industry effort. Such networks with autonomous APs require long term unattended operation without a steady energy source or a change of batteries, and are typically powered by renewable energy sources, particularly by solar energy harvesting \cite{Loonexample,Droneexample}.  

We define an Access Point on the Move (APOM) as a flying platform that provides Internet service to users it encounters along its path, who demand service with possibly different quality requirements. This paper considers the problem of optimal allocation of harvested energy in time by an APOM to a dynamic set of users.

\begin{figure}
\includegraphics[scale=0.6]{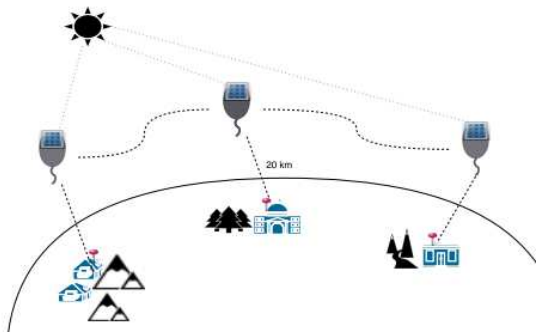}
\centering
\caption{Illustration of an Access Point on the Move which harvests energy and encounters user demands as it flies over a geographical area.}
\end{figure}

We concern ourselves with the following problem: Suppose that, as the APOM moves along its trajectory, it encounters users sequentially. Given its instantaneous energy budget which is replenished by arbitrary amounts at arbitrary times, observed causally, how should it decide to serve or reject a given customer? We impose a utility function related to a total value collected over users that are served within a given time frame. The deterministic version of the problem can be modeled as an online 0/1 knapsack problem with dynamic capacity, where the capacity, or service capacity, at a given time, corresponds to the total energy stored in the battery at that time. The stochastic version, where the arrival processes of energy and user demand are assumed to be semi-Markov processes, may be modeled as a Semi-Markov Decision Process.

In a practical application scenario, there may be a fleet of APOMs travelling over a geographical area, hence one might imagine that a user that was rejected service by an APOM will be served by another one that will come along. In this paper, we limit attention to solving the single access point problem, while the multiple AP case would be an immediate and interesting extension.

\subsection{Related Work}

According to a International Telecommunications Union (ITU) report \cite{ITU2014}, only 40$\%$ of people around world had access to the Internet as of 2014. Motivated by the potential opportunities presented by this gap, major industry players have been pushing for ubiquitous connectivity by deploying airborne mobile APs \cite{balloon2014, Loonexample,Droneexample}. Mobile service providers in general have certain advantages over fixed ones in terms of being more responsive to varying geographic communication demand \cite{Shi2008}. Various studies exist regarding mobile sinks in conventional networks that do not exploit renewable energy \cite{Gao2011, Shi2008}. Some of  these studies focused on determining optimal paths in order to prolong network lifetime \cite{Guo2012, Alkesh2011}.

In recent years, employing energy harvesting (via ambient energy sources such as solar irradiation \cite{Tekbiyik2013}, vibrations \cite{Seccad_paper}, or electromagnetic sources \cite{ETTenergy} etc.) to power transmitters of network devices such as access points has been drawing great attention from the research community. The consideration of mobile access points with energy harvest capability is relatively new. Several of the recent studies on the topic are concerned with finding optimal routing paths \cite{Ren2012}. Xie et al. \cite{Xie2013} address the problem of colocating the mobile service provider on the wireless charging machine with the objective of minimizing energy consumption. The closely related works of Ren and Liang \cite{Ren2013} considered a distributed time allocation method to maximize data collection in energy harvesting sensor networks while defining a scenario of a constrained path with all sensors having renewable energy sources. 

The optimality of a threshold based approach for a binary decision problem to transmit or defer transmision tasks on a Gilbert-Elliot channel supplied by energy harvesting is proved in \cite{Ephremides2012}. Various other works, such as \cite{Tan2014}, and \cite{Zorzi2012}, also consider threshold based solutions for resource allocation problems in energy harvesting systems for stochastic models. In recent work \cite{allerton,elif_thesis, tugce_thesis} resource allocation for solar powered stationary and mobile service providers have been addressed through various different optimization techniques.

The problem at hand can be set up as a 0/1 dynamic online knapsack problem. While the knapsack problem is a well known combinatorial optimization problem \cite{Martello1990},  for which competitive online solutions are limited \cite{Tiedemann14}. In the sequel, Chakrabarty et. al. propose a constant competitive solution to the problem with static and large capacity under certain assumptions \cite{Zhou2008}. The dynamic capacity case, which applies to the setup in this paper, is largely open.

\subsection{Our Contributions}

One aspect that sets this paper apart from other studies concerned with resource allocation at a mobile access point powered through (solar) energy harvesting, is the formulation of an {\emph{online}} user admission problem. The resource allocation problem is mapped to a  multi constraint \textit{0/1 knapsack problem} (KP). After exhibiting the existence of a threshold based optimal policy, scalable and computationally cost effective heuristics are proposed. The performance of these heuristics are studied numerically (through simulations) and a competitive ratio analysis is conducted. 

There are a number of studies implementing the optimization tools utilized in this paper such as genetic algorithms and rule-based logic, in resource allocation literature. However, to the best of our knowledge this is the first application of these techniques to a threshold based user selection problem. Furthermore, our schemes constitute a fairly competitive solution to the dynamic knapsack problem with incremental capacity, optimal competitive ratios for which are currently not available. 

\subsection{Organization of the paper}

The rest of the paper is organized as follows: The system model and deterministic problem formulation are provided in Section \ref{sec:pstat}.  Next, the stochastic version of the problem is examined and several approaches providing both optimal and suboptimal solutions are proposed in Section \ref{sec:stoc}.  After exhibiting an optimal algorithm and the existence of a threshold for the stochastic model, the problem is investigated for a deterministic model where several novel optimization tools are employed to propose threshold based heuristic solutions to the resource allocation problem in Section \ref{sec:deter}.  Detailed numerical and simulation results are presented in Section \ref{sec:results}. The paper is concluded with an outline of future directions in Section \ref{sec:conc}.

%\subsection{Related Work}

\section{System Model and Problem Statement}
\label{sec:pstat}

%Consider an Access Point (AP) moving on a path to provide service to the different users in a large geographical area. It is powered by devices capable of energy harvesting, that is, renewable energy sources such as solar, wind, vibration etc. are employed to meet the energy expenditure of the downlink communication system. Most probable source of energy will be the solar irradiation exploiting the high potential of energy harvesting.  The overall system description can be seen explicitly from Figure \ref{fig:apom_overall}.

Consider an APOM moving on a predefined path and empowered by renewable energy resources. As it travels on its route, each user request is regarded as an \textit{event} representing the start of a time \textit{slot}. Decisions are to be made on an \textit{event-based schedule}, i.e. once per slot. We consider a finite horizon problem with $N$ slots and $N$ (distinct) corresponding users. Considering the operating scenario on rural areas, it is reasonable that stationary end users (local base station transceivers) are distributed sparsely over a large geographical area. Due to the topology, it is presumable that one such user will be observed at the time, which motivates a sequential user arrival model. 

Note that each energy replenishment will constitute an increase in the service capacity of the APOM. The problem will be broken down into $J$ harvest periods as depicted in Figure \ref{fig:model}, each incoming (instantaneous) energy replenishment starts a new subinterval. Each subinterval contains an integer number of slots (not necessarily corresponding to equal time intervals). Each slot corresponds to a single user demand being presented to the APOM. A user is characterized by a \textit{value} and \textit{weight} pair:  $(v_n, w_n)$ for the $n^{th}$ user. The \textit{value} of a user corresponds to the utility gained by serving that user whereas the \textit{weight} corresponds to a cost (e.g., the energy consumption, i.e. the reduction from the total instantaneous service capacity of the APOM) related to serving that user. The difference in costs of users may in practice be related to the users having different predefined QoS (Quality of Service) requirements, their channels having different qualities, etc. Differences in utility could model different user or service priorities/types, as well as payments.   

The APOM makes a binary decision at each slot, whether to serve the encountered user or not. Once a decision on a user has been made, there will be no re-evaluation of the same user, therefore the decisions are irreversible. In the case of interest where energy replenishment rate cannot meet the power needed to serve all incoming users,  the APOM has to  pick a proper subset of users to maximize total utility under energy constraints in an online fashion. 
\begin{figure*}
\centering
\includegraphics[scale=0.7]{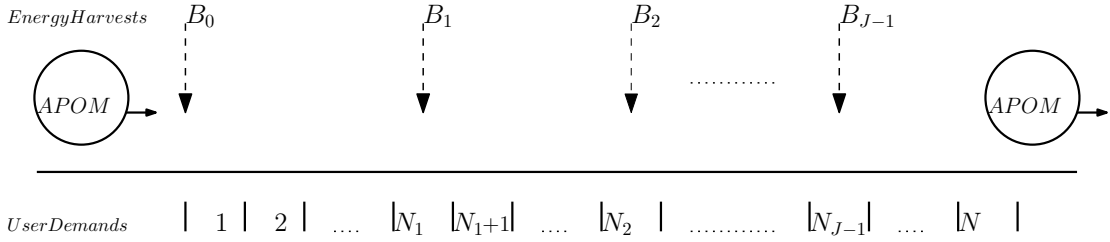}
\caption{Illustration of the problem setup with a sequence of energy arrivals $\{B_0,..., B_{J-1}\}$ and service requests from a total that the APOM encounters on its path.}
\centering
\label{fig:model}
\end{figure*}

In the following sections online policies are proposed for deterministic variations of this problem. One practical motivation for the deterministic formulation is that, for some energy sources, solar energy in particular, the energy harvest profile is quite predictable as it exhibits a daily pattern \cite{ Tekbiyik2013}. In our stochastic problem formulation, energy harvesting is modelled as a random stochastic process using similar assumptions as in recent related literature \cite{Ephremides2012, Tan2014, elif_thesis}. Both deterministic and stochastic approaches for this system setup bring discrete power levels and utilities, which is also consistent with practical concerns. Following knapsack terminology, APOM is characterized by its \textit{capacity}, which corresponds  to the amount of energy stored in its battery. The objective is to collect the maximum value over $N$ users while ensuring that total weight does not exceed the service capacity at any time. If the total capacity was static, the problem would be an online knapsack problem which is known to be NP-hard. The problem with intermittent capacity replenishments is the more difficult online knapsack problem with dynamic incremental capacity. 

Below, we start by explicitly stating the offline problem formulation where a duration of time covering a total of $J\geq 1$ energy replenishments. $N_i$ denotes the arrival time of the $i^{th}$ energy replenishment of value $B_i$, the first one being at time $N_0=0$, with amount $B_0$. Each time slot corresponds to one user arrival, and the number of users observed by time $N_J$ is $N$,  with $B_J=0$. The problem is stated in terms of the decision variables $x_n\in \{0,1\}, n=\{1, 2, \ldots, N\}$'s, which indicate the decision to either serve the $n^{th}$ user ($x_n=1$) or pass it up ($x_n=0$).

\begin{problem}{Deterministic (offline) service policy optimization: }
\mbox{Given} $J \in \mathbb{Z}^+,  0<N_1<N_2\ldots N_{J-1}<N_J=N; \; \{v_i, w_i\}, i=1,2,\ldots N; \; \mbox{$B_0, B_1,\ldots B_{J-1}$} \in \mathbb R^+ \mbox{such that } \sum_{j=0}^{J-1}B_j\leq N, $ choose $ x_n\in\{0,1\}, \forall n\in\{1,..N\} $ to: 
\begin{align}
\noindent \mbox{Max. } &\sum_{n=1}^Nv_nx_n \label{eq:knap1}
\\
\label{eq:c}
\noindent \mbox{s.t. } &\sum_{l=1}^{n}w_lx_l\leq \sum_{j=0}^{j_n}B_j \\
\noindent \forall n, 1\leq n\leq N,  &\mbox{ where } j_n=\arg \max_{j\geq 0}\{N_j\leq n\} \nonumber
\end{align}
\label{prob1}
\end{problem}

Note that  $j_n$ is the time slot index of the last energy arrival before time $n$, hence the constraint (\ref{eq:c}) amounts to satisfying ``energy causality", i.e., at any time, the total energy used cannot exceed the total energy harvested.

\section{Service Policy  of APOM through a Stochastic Knapsack Formulation}
\label{sec:stoc}

Now, we will impose a stochastic structure on the problem in order to obtain an online algorithm via DP. Suppose $(v_n, w_n)\in {\mathcal S}_1 \times {\mathcal S}_2 \subset \mathbb R^+ \times  \mathbb R^+$ such that $|{\mathcal S}_1 \times {\mathcal S}_2|=K<\infty$. That is, there are $K$ different types of users (value-weight pairs). Furthermore, for each slot $n$, a user of type $k$, $1\leq k\leq K$ occurs with probability $p(k)$, independently from all other slots.  Let $\{N_j,~ j=1,..n\}$ and $\{B_j \in [0,M], j=1,.. N\}$  be positive-valued IID random sequences, where $M<\infty$. Due to the finiteness of $M$, the total energy stored under any policy at any time is finite.

The objective is to maximize an expected total reward collected over the time horizon.

\begin{problem}{Stochastic Service policy optimization: \mbox{Given} $J \in \mathbb{Z}^+,  0<N_1<N_2\ldots N_{J-1}<N_J=N; \; \{v_i \in S_1, w_i \in S_2\}, ~p(k)=Prob(v_i=v(k)\mbox{ and }w_i=w(k)),~ i=1,2,\ldots N ,~ k=1,2,\ldots K; \; \mbox{$B_0, B_1,\ldots B_{J-1}$} \in \mathbb [0,M]~ \mbox{such that } \sum_{j=0}^{J-1}B_j\leq N, $}
\label{prob2}

\mbox{Choose a policy} $\pi=[x_1, \ldots, x_n]\in \{0,1\}^N$ \mbox{to}
\begin{align}
\noindent \mbox{Maximize:  } &\displaystyle\mathbb{E}\left(\sum_{n=1}^{N}v_nx_n\right)
\label{eq:knap1_s}
\\
\label{eq:causality}
\noindent \mbox{s.t. }  &\sum_{l=1}^{n}w_lx_l\leq \sum_{j=0}^{j_n}B_j \\
\noindent \forall n, 1\leq n\leq N,  &\mbox{ where } j_n=\arg \max_{j\geq 0}\{N_j\leq n\} \nonumber
\end{align}
\end{problem}

\subsection{Optimal Online Policy via Dynamic Programming}

Note that feasible policies $\pi=[x_1, \ldots, x_n]$ belong to the action space $\{0,1\}^N$, and further satisfy energy causality, given in (\ref{eq:causality}). For every feasible policy, the remaining energy, $e_n$, at the beginning of slot $n$ is given by:
\[
e_n=\sum_{j=0}^{j_n}B_j-\sum_{i=1}^{n-1}x_iw_i
\]
recall that $j_n$ is the time slot index of the last energy arrival up to (and including) slot $n$.

As user and energy arrival sequences are IID, it is straightforward to see that the state at time slot $n$ is captured by $(v_n, w_n, e_n)$, and $e_n$ evolves as a Markov decision process where the action taken on the $n^{th}$ slot,  $x_n$, is:
\begin{align}
x_n(v_n, w_n, e_n)=\left\{
     \begin{array}{lr}
       1 &  (\mbox{\textit{transmit}})\\
       0 &  (\mbox{\textit{defer}})
     \end{array}
   \right. \
\end{align}

For ease of exposition, we make two other simplifying assumptions. First, suppose $|B_j|=1$ for all $j$, equivalently,  let  energy arrivals be given by an IID Bernoulli sequence $\{Q_n \in \{0,1\}, n=1,\ldots,N\}$ over slots. Seconly, we set costs to unity, $w(k)=1$, for all types of users $k$.  Consequently, the feasibility condition reduces to $\sum_{n=1}^j x_n\leq e_0+\sum_{n=1}^{j-1} Q_n$.

Let $V_{\pi}(e_n,k_n)$ be the objective function of this Markov Decision Problem; the expectation of the total value collected from slot $n$ starting with energy level $e_n$ and user type $k_n$ till the end of time horizon $N$ under policy $\pi$. A Dynamic Programming (DP) problem recursion can be written  starting from the last step, namely $N^{th}$ slot:

\begin{align}
V_N^*(e_N,k_N)= v(k_N), \forall e_N \geq 1
\end{align} and go backwards using:
\begin{align}
V_{x_{n}}(e_n,k_n)= &v(k_{n})x_{n}\\+&\displaystyle\mathbb{E}_{(k',Q)}\{V^*_{n+1}(e_{n}-w(k_{n})x_{n}+Q_n,k'_{n+1})\}\nonumber
\end{align}
For a current user $n$,  expected value till the end of the horizon, after choosing to transmit to the current user (i.e. $x_n=1$), is denoted as $V_1(e_n,k_n)$ whereas deferring that user  (i.e. $x_n=0$) is represented as $V_0(e_n,k_n)$. Comparing these quantities, the optimal expected value may be stated as:
\begin{eqnarray}
V^*_n(e_n,k_n)&=& \max_{x_n \in \{0,1\}}V_{x_n}(e_n,k_n)\nonumber\\
&=&\max\{V_1(e_n,k_n),V_0(e_n,k_n)\}
\label{onsol2}
\end{eqnarray}

Backward induction of DP reveals a threshold based policy where the decision maker adopts a conservative attitude at the initial slots and behaves more greedily toward the end of the horizon. To illustrate, the APOM will try to conserve its energy to serve users with higher utility at the beginning. A pseudo code for the dynamic programming solution is given in Algorithm 1.

\begin{algorithm}
\label{algorithm_DP}
\caption{DP solution to the problem for finite horizon}
\footnotesize\begin{algorithmic} 
\FOR {$e = 0$ to $E$}
\FOR {$k = 1$ to $K$}
\STATE $V(e_{N+1},k_{N+1})=0$ \COMMENT{Initialization step}
\ENDFOR
\ENDFOR
\FOR {$n= N$ to $1$}
\FOR {$e = 0$ to $E$}
\FOR {$k = 1$ to $K$}
\IF{$w(k_n)>e_n$}
\STATE $V(e_n,k_n)=E_{(k',Q)}\{V^*_{n+1}(e_n+Q_n,k'_{n+1})\}$
\ELSE
\STATE $V(e_n,k_n)= \max\{\mathbb{E}_{(k',Q)}\{V^*_{n+1}(e_n+Q_n,k'_{n+1})\},    v(k_n)+\mathbb{E}_{(k',Q)}\{V^*(e_n-w(k_n)+Q_n,k'_{n+1}\}\}$ \COMMENT{Recursive equation}
\ENDIF
\RETURN {$V(e_n,k_n)$} 
\ENDFOR
\ENDFOR
\ENDFOR
\end{algorithmic}
\end{algorithm}

\subsection{Structure of the Optimal Policy}

The structure of the optimal policy  may be obtained based on the DP relaxation.

\subsubsection{Existence of threshold}

\begin{lemma}
\label{prop1}
For a given k and n, the state defined as expected total reward $V_{x_{n}}(e_n,k_n)$ is super-modular in available energy and decision pair $(e_n,x_n)$, that is, for any $0\leq e_0 \leq e_1 <\infty$, $V_1(e_1,k_n)+V_0(e_0,k_n)\geq V_0(e_1,k_n)+V_1(e_0,k_n)$ for $1\leq n \leq N$. 
\end{lemma}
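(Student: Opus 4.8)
The plan is to recast the stated inequality as a monotone increasing-difference property and then trace it back to concavity of the value function, which I would establish by backward induction. Rearranging the claim, it is equivalent to showing that the incremental value of serving the current user, $D_n(e):=V_1(e,k_n)-V_0(e,k_n)$, is nondecreasing in the available energy $e$. Invoking the two simplifying assumptions in force ($w(k)=1$ and Bernoulli energy arrivals $Q_n\in\{0,1\}$) together with the recursion for $V_1$ and $V_0$, I would write
\[
D_n(e)=v(k_n)-\bigl[W_{n+1}(e)-W_{n+1}(e-1)\bigr],\qquad W_{n+1}(e):=\mathbb{E}_{(k',Q)}\{V^*_{n+1}(e+Q_n,k')\}.
\]
Since $v(k_n)$ does not depend on $e$, $D_n$ is nondecreasing precisely when the marginal value of stored energy $W_{n+1}(e)-W_{n+1}(e-1)$ is nonincreasing in $e$, i.e. when $W_{n+1}$ (hence $V^*_{n+1}$) is concave in $e$.

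This reduces the lemma to a concavity statement. Because the expectation over $(k',Q)$ is a nonnegative combination and because integer shifts preserve concavity, $W_{n+1}$ inherits concavity from $V^*_{n+1}(\cdot,k')$. It therefore suffices to prove, by backward induction on $n$, that each $V^*_n(\cdot,k)$ has nonincreasing first differences $V^*_n(e+1,k)-V^*_n(e,k)$ on $e\in\{0,1,2,\dots\}$. The base case $n=N$ is immediate from the terminal condition $V^*_N(e,k)=v(k)$ for $e\ge 1$ and $V^*_N(0,k)=0$, whose successive differences $v(k),0,0,\dots$ are nonincreasing.

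The inductive step is the crux. Assuming concavity at slot $n+1$, the Bellman recursion reads $V^*_n(e,k)=\max\{v(k)+W_{n+1}(e-1),\,W_{n+1}(e)\}$, with transmission infeasible (forced defer) at $e=0$. The obstacle is that a pointwise maximum of concave functions is not concave in general, so I must exploit the single-crossing structure created by concavity of $W_{n+1}$: the marginal $W_{n+1}(e)-W_{n+1}(e-1)$ is nonincreasing, hence transmission is optimal exactly for $e$ at or above a threshold $e^*(k)$. I would then verify the nonincreasing-difference property by reading off the first differences of $V^*_n$ across the three regimes. Strictly below $e^*(k)$ both branches defer and the differences equal those of the concave $W_{n+1}$; strictly above, both transmit and the differences equal the one-step-shifted differences of $W_{n+1}$; and at the switch the difference equals exactly $v(k)$. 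The two optimality inequalities, ``defer optimal at $e^*(k)-1$'' and ``transmit optimal at $e^*(k)$,'' are precisely what sandwich this transitional difference $v(k)$ between its neighbours, so the whole difference sequence remains nonincreasing.

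With concavity of $V^*_n(\cdot,k)$ in hand for every $n$ and $k$, I would close the argument by applying it one step ahead: $W_{n+1}(e)-W_{n+1}(e-1)$ is nonincreasing, so $D_n$ is nondecreasing, which upon rearrangement is exactly $V_1(e_1,k_n)+V_0(e_0,k_n)\ge V_0(e_1,k_n)+V_1(e_0,k_n)$ for $0\le e_0\le e_1$. I expect the switch-point verification in the inductive step to be the main difficulty, since that is the one place where the naive ``maximum of concave functions'' reasoning breaks down and the threshold optimality conditions must be brought in to rescue monotonicity.
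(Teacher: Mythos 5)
Your proposal is correct and follows essentially the same route as the paper: both reduce the supermodularity claim to concavity (nonincreasing first differences) of $V^*_n(\cdot,k)$ in the energy level and establish that concavity by backward induction from the terminal slot. The only cosmetic difference is in the inductive step, where you organize the verification around the single-crossing/threshold structure and the two optimality inequalities at the switch point, while the paper runs an exhaustive case check over the optimal actions $[x_1,x_3]\in\{0,1\}^2$ at the three energy levels using optimality of the middle action --- the same inequalities doing the same work.
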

\begin{proof}
The stated super-modularity  corresponds to the statement:
\begin{align}
V_1(e_1,k_n)-V_0(e_1,k_n) \geq V_1(e_0,k_n)-V_0(e_0,k_n) \label{eq:holding}
\end{align}
 In our construction energy harvests are in random increments of $1$ hence there is an integer energy level difference between $e_1$ and $e_0$. It suffices to prove this statement for $e_1-e_0=1$ (it can be easily extended to higher differences by iteration of the same argument.) Let $e_1=e_0+1=e+1$. The equality above can be shown to hold for $1\leq n\leq N$ by the following argument: 
\begin{align}
V_1(e_n=e,k_n)-V_0(e_n=e,k_n) = v(k_n)\nonumber\\+\sum_{k'=1}^K p(k')(q(V^*_{n+1}(e,k'_{n+1})-V^*_{n+1}(e+1,k'_{n+1}))\nonumber\\+(1-q)(V^*_{n+1}(e-1,k'_{n+1})-V^*_{n+1}(e,k'_{n+1})))\label{eq:eqproof1}
\end{align} and,
\begin{align}
V_1(e_n={e+1},k_n)-V_0(e_n={e+1},k_n)= v(k_n)\nonumber\\+\sum_{k'=1}^K p(k'_{n+1})q(V^*_{n+1}({e+1},k'_{n+1})-V^*_{n+1}({e+2},k'_{n+1}))\nonumber\\+(1-q)(V^*_{n+1}(e,k'_{n+1})-V^*_{n+1}({e+1},k'_{n+1})))\label{eq:eqproof2}
\end{align} 
By subtracting  (\ref{eq:eqproof1}) from (\ref{eq:eqproof2}), a sufficient condition  for (\ref{eq:holding}) to hold $\forall n \geq 1$ becomes:
 \begin{equation}
V^*_n(e,k_n)-V^*_n(e-1,k_n) \geq V^*_n(e+1,k_n)-V^*_n(e,k_n) \label{eq:eqproofsuff}
\end{equation} 
Then, the condition of (\ref{eq:eqproofsuff}) is proved by induction. First, the condition is satisfied when $n=N$, that is both sides of the equation are equal to $0$. Second, if it is true for some $n+1$ as in (\ref{eq:rec}) then it is proved to hold for $n$.
\begin{align}
V^*_{n+1}(e,k_{n+1})-V^*_{n+1}({e-1},k_{n-1}) \geq  V^*_{n+1}({e+1},k_{n+1})-V^*_{n+1}(e,k_{n+1})\label{eq:rec}
\end{align} 
We will examine the three cases corresponding 3 energy states ($e+1,e,e-1$) and the three optimal decisions ($x_1,x_2,x_3 \in \{0,1\}$) respectively to show whether the inequality given in (\ref{eq:ineq}) holds. 
\begin{align}
V_{x_1}({e+1},k_n)-V_{x_2}(e,k_n)-(V_{x_2}(e,k_n)-V_{x_3}({e-1},k_n))\leq 0 \label{eq:ineq}
\end{align} This is proved by adding $-V_{x_1}(e,k_n)+V_{x_1}(e,k_n)$ and  $-V_{x_3}(e,k_n)+V_{x_3}(e,k_n)$ to the LHS of the inequality such that:
\begin{align}
V_{x_1}({e+1},k_n)-V_{x_1}(e,k_n)+V_{x_1}(e,k_n)\nonumber\\-V_{x_2}(e,k_n)-(V_{x_2}(e,k_n)-V_{x_3}({e-1},k_n))\nonumber\\-V_{x_3}(e,k_n)+V_{x_3}(e,k_n)\leq 0
\label{eq:proof3}
\end{align}

By optimality of $x_2$ for energy state $e$, $V_{x_1}(e,k_n)-V_{x_2}(e,k_n)$ statement is already smaller than or equal to $0$. Same property holds for the $V_{x_3}(e,k_n)-V_{x_2}(e,k_n)$ statement. Therefore, we should only consider the remaining terms. For each possible case of $[x_1, x_3]\in \{0,1\}^2$, the inequality in (\ref{eq:proof3}) is shown to be satisfied. For example, lets examine the case where $x_1=1, x_2=1$:
\begin{align}
V_{1}({e+1},k_n)-V_{1}(e,k_n)-(V_{1}(e,k_n-V_{1}({e-1},k_n))\nonumber \\ =\sum_{k'=1}^K p(k'_{n+1})q(V^*_{n+1}({e+1},k'_{n+1})-V(e,k'_{n+1})\nonumber\\-V^*_{n+1}(e,k'_{n+1})+V^*_{n+1}({e-1},k'_{n+1}))\nonumber\\+(1-q)(V^*_{n+1}(e,k'_{n+1})-V^*_{n+1}({e-1},k'_{n+1})\nonumber \\-V^*_{n+1}({e-1},k'_{n+1})+V^*_{n+1}({e-2},k'_{n+1})\leq 0
\end{align}
The above inequality holds since the difference is assumed to be non increasing in available energy ($e_n$). Similar steps may be followed for all combinations of $x_1$ and $x_3$ where $[x_1, x_3]\in \{0,1\}^2$ and give the same result.
Hence, the total expected reward is a super-modular function in $(e_n,x_n)$.  
\end{proof}

\begin{theorem}
\label{thm:thresh1}
The optimal policy is a threshold type policy in the available energy $e_n$ at each slot $n$ for a user $k_n$, thus there is a threshold $\eta$ defined as:
$x_n(k_n)=\left\{
     \begin{array}{lr}
       1 & : e_n \geq \eta_n(k_n)\\
       0 & : e_n < \eta_n(k_n)
     \end{array}
   \right. \label{eq:threshold_function}$
\end{theorem}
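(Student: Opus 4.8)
The plan is to read off the threshold structure directly from the super-modularity established in Lemma \ref{prop1}, so that essentially all of the work is already done. First I would recall that at slot $n$ the optimal action compares the two continuation values in (\ref{onsol2}) and sets $x_n=1$ precisely when $V_1(e_n,k_n)\geq V_0(e_n,k_n)$, breaking ties toward transmission. Introducing the advantage function $\Delta_n(e,k_n):=V_1(e,k_n)-V_0(e,k_n)$, the optimal rule becomes $x_n=1$ if and only if $\Delta_n(e_n,k_n)\geq 0$. The entire theorem thus reduces to showing that, for fixed $n$ and $k_n$, the sign of $\Delta_n$ can switch only once as $e$ increases.

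Second, I would invoke Lemma \ref{prop1} in exactly the form of inequality (\ref{eq:holding}): for $e_0\leq e_1$ we have $V_1(e_1,k_n)-V_0(e_1,k_n)\geq V_1(e_0,k_n)-V_0(e_0,k_n)$, which states precisely that $\Delta_n(e,k_n)$ is non-decreasing in $e$. Consequently the set $\{e:\Delta_n(e,k_n)\geq 0\}$ is an upper set: if $\Delta_n(e,k_n)\geq 0$ and $e'\geq e$, then $\Delta_n(e',k_n)\geq\Delta_n(e,k_n)\geq 0$. Because energy levels are integer-valued (harvests arrive in unit increments under the simplifying assumptions), I can define $\eta_n(k_n):=\min\{e:\Delta_n(e,k_n)\geq 0\}$, with the minimum attained whenever this set is nonempty. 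This yields exactly $x_n=1$ for $e_n\geq\eta_n(k_n)$ and $x_n=0$ otherwise, which is the claimed threshold rule (\ref{eq:threshold_function}).

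Finally I would tidy the boundary cases so that $\eta_n(k_n)$ is well-defined in all regimes. Transmission is infeasible whenever $e_n<w(k_n)$ (here $w(k_n)=1$), forcing deferral; this is automatically consistent with the threshold, since it can only place $\eta_n(k_n)$ at or above $w(k_n)$. If the advantage is already non-negative at the smallest feasible energy level, the threshold sits at $w(k_n)$; if $\Delta_n$ is negative for every reachable $e$, one sets $\eta_n(k_n)=\infty$ (always defer). I do not expect a genuine obstacle here, because the monotonicity of $\Delta_n$ — the crux of the argument — is exactly the content of Lemma \ref{prop1}; the only care needed is the clean definition of the threshold and the observation that it legitimately depends on both the slot index $n$ and the user type $k_n$, reflecting the horizon-dependent conservatism already noted after the DP recursion.
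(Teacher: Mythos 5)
Your proposal is correct and follows essentially the same route as the paper: both arguments rest entirely on the super-modularity of Lemma \ref{prop1}, the paper phrasing it as a contradiction (a transmit--defer--transmit pattern over $e_1<e_2<e_3$ would violate the lemma) while you state the same fact directly, observing that $\Delta_n(e,k_n)=V_1(e,k_n)-V_0(e,k_n)$ is non-decreasing in $e$ so its non-negativity set is an upper set. Your explicit definition $\eta_n(k_n)=\min\{e:\Delta_n(e,k_n)\geq 0\}$ and the treatment of the infeasible and always-defer boundary cases are minor refinements the paper omits, not a different method.
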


\begin{proof}
%As observed in the lemma \ref{prop1} and \ref{prop2}, a decision is non-decreasing in available energy. This point is also justified as follows. 
Let $\{e_1, e_2, e_3\}$ be the available energies at three decision instants such that $e_1<e_2<e_3$. Suppose there exists an optimal policy which chooses to transmit at energy levels $e_1$ and $e_3$  while deferring the user at the energy level $e_2$. This contradicts Lemma \ref{prop1}. Therefore, the crossover from \textit{Defer} to \textit{Transmit} happens only once as $e_n$ is increased (holding all other parameters constant), i.e. there is a threshold.
\end{proof}

\subsubsection{Monotonicity of threshold}

\begin{lemma}
\label{lemmamon1}
Expected total reward $V_{x_n}(e_n,k_n)$ is super-modular in slot index and decision pair ($n,x_n$), that is $V_1(e_{n+1},k_{n+1})+V_0(e_n,k_n)\geq V_0(e_{n+1},k_{n+1})+V_1(e_n,k_n)$. 
\end{lemma}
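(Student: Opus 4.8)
The plan is to recast the claimed super-modularity in $(n,x_n)$ as a monotonicity statement on the threshold. Holding the energy level $e$ and user type $k$ fixed and writing $\Delta_n(e,k):=V_1(e,k)-V_0(e,k)$ for the advantage of transmitting over deferring at slot $n$, the stated inequality $V_1(e_{n+1},k_{n+1})+V_0(e_n,k_n)\geq V_0(e_{n+1},k_{n+1})+V_1(e_n,k_n)$ is equivalent to $\Delta_{n+1}(e,k)\geq \Delta_n(e,k)$. Since by Theorem \ref{thm:thresh1} the threshold $\eta_n(k)$ is the smallest energy at which $\Delta_n(e,k)\geq 0$, a non-decreasing $\Delta_n$ in $n$ yields a non-increasing threshold in $n$; this is exactly the ``conserve early, behave greedily near the horizon'' structure asserted after the DP recursion.

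First I would expand both value functions with the unit-weight, Bernoulli-harvest recursion. Transmitting at slot $n$ in state $(e,k)$ gives $V_1(e,k)=v(k)+\mathbb{E}_{k',Q}\{V^*_{n+1}(e-1+Q,k')\}$ and deferring gives $V_0(e,k)=\mathbb{E}_{k',Q}\{V^*_{n+1}(e+Q,k')\}$, so $\Delta_n(e,k)=v(k)-\mathbb{E}_{k',Q}\{D_{n+1}(e+Q,k')\}$, where $D_m(f,k'):=V^*_m(f,k')-V^*_m(f-1,k')$ is the marginal value of one energy unit at slot $m$. Because $v(k)$ does not depend on the slot index, subtracting the two expansions leaves $\Delta_{n+1}(e,k)-\Delta_n(e,k)=\mathbb{E}_{k',Q}\{D_{n+1}(e+Q,k')-D_{n+2}(e+Q,k')\}$. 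Hence it suffices to prove the pointwise claim that the marginal value of energy is non-increasing in the slot index, namely $D_n(f,k)\geq D_{n+1}(f,k)$ for every $f\geq 1$, every $k$, and every $n$.

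I would establish this key claim by backward induction on $n$, carrying along two auxiliary monotonicity properties of $V^*_n$: that it is non-decreasing in $e$ (so that $D_n\geq 0$, provable by the usual coupling argument that a larger battery can always imitate a smaller one) and that it is concave in $e$, i.e. $D_n(f,k)\geq D_n(f+1,k)$, which is precisely inequality (\ref{eq:eqproofsuff}) already proved in Lemma \ref{prop1}. The base case is immediate from the terminal conditions: $D_{N+1}\equiv 0$ while $D_N(f,k)\geq 0$ by monotonicity. For the induction step I would substitute the optimal continuation $V^*_{m}=\max\{V_1,V_0\}$ into $D_n$ and $D_{n+1}$ and compare term by term; because of the $\max$, this forces a case analysis over the optimal actions taken at the energy levels $f$ and $f-1$ entering each difference, exactly as in the proof of Lemma \ref{prop1}, using the inductive hypothesis together with concavity to sign the leftover terms. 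Substituting the established inequality $D_{n+1}\geq D_{n+2}$ back into the expression for $\Delta_{n+1}-\Delta_n$ then completes the proof, and the threshold monotonicity follows.

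I expect the main obstacle to be the induction step for $D_n\geq D_{n+1}$: the $\max$ inside the DP recursion means the optimal action can differ across the two energy levels and across the two slot indices being compared, so the comparison cannot be done by naive termwise subtraction. Handling it cleanly requires the same device used in Lemma \ref{prop1}, inserting and cancelling the sub-optimal value terms $V_{x}(f,\cdot)$ so that optimality and concavity can be invoked to discard the unwanted differences; verifying that every combination of actions yields a sign-definite remainder is the delicate bookkeeping on which the argument hinges.
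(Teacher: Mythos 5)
Your proposal is correct, and it is considerably more complete than what the paper itself provides: the paper's proof of Lemma \ref{lemmamon1} consists only of rewriting the supermodularity as the difference inequality (\ref{eq:supermod}) and appealing to ``similar steps'' as Lemma \ref{prop1}, with no further argument. Your reduction supplies the missing content and is the right one. With unit weights and Bernoulli harvests, $\Delta_n(e,k)=v(k)-\mathbb{E}_{k',Q}\{D_{n+1}(e+Q,k')\}$ where $D_m(f,k')=V^*_m(f,k')-V^*_m(f-1,k')$, so the lemma is exactly equivalent to the pointwise claim $D_{n+1}\geq D_{n+2}$, i.e.\ that the marginal value of a unit of energy is non-increasing in the slot index; your backward induction for this is valid and the base case from the terminal condition is immediate. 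One remark that defuses the step you flag as delicate: for $f\geq 2$, setting $\delta_{n+1}(g):=\mathbb{E}_{k',Q}\{D_{n+1}(g+Q,k')\}$ and pulling the common continuation values out of the two maxima gives $D_n(f,k)=\delta_{n+1}(f-1)+\max\{v(k),\delta_{n+1}(f)\}-\max\{v(k),\delta_{n+1}(f-1)\}$; since $x\mapsto x-\max\{v,x\}=\min\{x-v,0\}$ and $y\mapsto\max\{v,y\}$ are both non-decreasing, the inductive hypothesis $\delta_{n+1}(g)\geq\delta_{n+2}(g)$ alone closes the induction --- no case analysis over optimal actions and no appeal to the concavity inequality (\ref{eq:eqproofsuff}) is actually required (the $f=1$ case, where only deferral is feasible at energy $0$, reduces to $\max\{v(k),\delta_{n+1}(1)\}\geq\max\{v(k),\delta_{n+2}(1)\}$). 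Finally, your reading of the statement --- fixing $e$ and the user type and varying only the slot index --- is the interpretation the paper intends, as confirmed by the proof of Corollary \ref{prop4}, and is necessary for the claim to be true at all.
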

\begin{proof}
Following similar steps as in the proof of Lemma \ref{prop1}, supermodularity corresponds to the statement:
\begin{align}
\label{eq:supermod}
V_1(e_{n+1},k_{n+1})-V_0(e_{n+1},k_{n+1})\nonumber\\\geq V_1(e_n,k_n)- V_0(e_n,k_n)
\end{align}
\end{proof}

\begin{corollary}
The threshold function on the available energy to serve a user, $\eta_n(k_n)$ defined in Theorem \ref{eq:threshold_function} is monotonically non-increasing with slot number $n$. %Moreover, the threshold function on the users' value, $\beta(e,n)$ defined in  (\ref{eq:thresh11}) is monotonically non-increasing with slot number $n$ and  monotonically non-decreasing with  energy level $e$. 
\label{prop4}
\end{corollary}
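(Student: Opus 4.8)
The plan is to read off the monotonicity of the threshold directly from the super-modularity in the slot index established in Lemma \ref{lemmamon1}, combined with the threshold characterization coming from Lemma \ref{prop1} and Theorem \ref{thm:thresh1}. First I would make the threshold explicit as a single crossover point. For a fixed user type $k$, write the incremental value of serving over deferring at slot $n$ as $\Delta_n(e,k) := V_1(e,k) - V_0(e,k)$, where the dependence on $n$ enters through the recursion into $V^*_{n+1}$. By Lemma \ref{prop1}, $\Delta_n(\cdot,k)$ is non-decreasing in the available energy $e$, so it changes sign at most once; the threshold of Theorem \ref{thm:thresh1} is then precisely the smallest energy at which serving becomes (weakly) preferable, i.e.
\[
\eta_n(k) = \min\{\, e : \Delta_n(e,k) \geq 0 \,\}.
\]

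Next I would specialize Lemma \ref{lemmamon1} to a common user type and a common energy argument. Setting $k_{n+1}=k_n=k$ and $e_{n+1}=e_n=e$ in the super-modularity inequality (\ref{eq:supermod}) yields, for every energy level $e$,
\[
\Delta_{n+1}(e,k) \geq \Delta_n(e,k).
\]
Thus, holding the user type fixed, the incremental value of serving is pointwise at least as large at the later slot $n+1$ as at slot $n$: the decision maker values immediate service more as the horizon shrinks, which is the ``greedier toward the end'' behaviour noted after the DP recursion.

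The conclusion then follows from a level-set comparison. Since $\Delta_{n+1}(e,k) \geq \Delta_n(e,k)$ for all $e$, whenever $\Delta_n(e,k) \geq 0$ we also have $\Delta_{n+1}(e,k)\geq 0$, so the nonnegativity set at slot $n$ is contained in that at slot $n+1$,
\[
\{\, e : \Delta_n(e,k) \geq 0 \,\} \subseteq \{\, e : \Delta_{n+1}(e,k) \geq 0 \,\}.
\]
The smaller set has the larger minimum, hence $\eta_{n+1}(k) \leq \eta_n(k)$, which is exactly the claimed non-increasing property in $n$.

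I expect the only delicate point to be the bookkeeping in specializing Lemma \ref{lemmamon1}: its statement carries the separate indices $k_{n+1}$ and $k_n$, so I must justify evaluating both sides at the same type and the same energy. This is legitimate because the IID assumption makes the value-to-go depend on the slot only through the remaining horizon, while the comparison we actually need is for a single arriving user type $k$ at a single energy level. A secondary point worth stating explicitly is that the threshold is genuinely a single crossover, guaranteed by Lemma \ref{prop1}; without the monotonicity of $\Delta_n(\cdot,k)$ in $e$, ``the smallest $e$ with $\Delta_n \geq 0$'' would not pin down a well-defined $\eta_n(k)$ and the level-set argument would fail.
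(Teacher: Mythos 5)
Your proposal is correct and follows essentially the same route as the paper: both arguments rest entirely on the super-modularity inequality (\ref{eq:supermod}) of Lemma \ref{lemmamon1}, specialized to a common type and energy level, to conclude that the set of energies at which transmitting is optimal can only grow with $n$. The paper phrases this as a contradiction (an optimal policy cannot transmit at slot $n$ yet defer at slot $n+1$ from the same state), while you argue directly via containment of the level sets of $\Delta_n(e,k)$; your version is somewhat more careful in making the threshold well-defined via Lemma \ref{prop1}, but the underlying idea is identical.
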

\begin{proof}
Let $n\geq 1$ be the first slot index such that the threshold increases from $n$ to $n+1$. This means the policy chooses to transmit to a user of some type $k$ at $n$ while deferring a user of the same type at slot $n+1$, for the same starting energy $e_n$. By  (\ref{eq:supermod}) this policy can be improved by reversing this decision hence cannot be optimal. 
\end{proof}

From the analysis above, a monotonic threshold type policy is shown to be optimal. However,  the computation of the exact optimal threshold without loss of generality requires extensive calculations and is not the focus of this study. Instead, suboptimal threshold policies are proposed by exploiting the structure of the optimal policy.

\subsection{Suboptimal Solution: Expected Threshold Policy}

DP provides an optimal solution for 0/1 dynamic and stochastic knapsack problem with  growing capacity; however its computational complexity increases exponentially with $N$, which is consistent with the NP-hardness of the problem \cite{Bertsekas}. In this section, a computationally cost- effective suboptimal policy called \textit{Expected Threshold Policy} \cite{Tan2013,Tan2014} will be adopted for this problem.

First, we define the following bound on the expectation  of energy depletion (RHS of (\ref{eq:bound})) at slot $n$ if the available energy is $e_n$ and expected harvest amount from slot $n$ till the end of time horizon $N$  is denoted as $\displaystyle\sum_{m=n}^{N-1}\mathbb{E}\{Q_m|Q_1^{n-1}\}$ .
\begin{align}
e_n+\displaystyle\sum_{m=n}^{N-1}\mathbb{E}\{Q_m|Q_1^{n-1}\} \geq \displaystyle\sum_{m=n+1}^{N}\mathbb{E}\{w_mx_m\}
\label{eq:bound}
\end{align}
After stating a bound on the expected energy consumption from slot $n$ till the end of time horizon in  (\ref{eq:bound}), a computationally cost effective suboptimal policy called ``Expected Threshold" is proposed  in (\ref{eq:expthr1}) as follows:
\begin{align}
x_n(k_n,e_n)=\left\{
     \begin{array}{lr}
       1 & : e_n \geq \widehat{\eta}_n\\
       0 & : e_n < \widehat{\eta}_n
     \end{array}
   \right.
   \label{eq:expthr11}
\end{align}
where 
\begin{align}
\widehat{\eta}_n = \displaystyle\sum_{m=n+1}^{N}\mathbb{E}\{w_mx_m\}-\displaystyle\sum_{m=n}^{N-1}\mathbb{E}\{Q_m|Q_1^{n-1}\}
\label{eq:expthr21}
\end{align}
As  it is seen from (\ref{eq:expthr1}) and (\ref{eq:expthr2}), APOM makes a decision to serve a user of type $k$ appearing in slot $n$ ($k_n$) if the available energy $e$ at slot $n$ ($e_n$) is greater than or equal to the threshold level $\eta_n$. $\eta_n$ is stated as the difference between the expected energy consumption for users with higher value and expected energy replenishment from slot $n$ till the end of horizon $N$.

As an example, if the weights are all equal to one and harvest process is IID, then 
\begin{align*}
\displaystyle\sum_{m=n+1}^{N}\mathbb{E}\{w_mx_m\}= \displaystyle\sum_{m=n+1}^{N}{p(k_{n})w(k_{n})}=(N-n+1)\displaystyle\sum_{k'=k_n+1}^K{p_{k'}}  \nonumber\\\displaystyle\sum_{m=n}^{N-1}\mathbb{E}\{Q_m|Q_1^{n-1}\}=(N-n+1)q
\end{align*}
so that the expected threshold policy becomes:
\begin{align}
x_n(k_n)=\left\{
     \begin{array}{lr}
       1 & : e_n \geq \eta_n(k_n)\\
       0 & : e_n < \eta_n(k_n)
     \end{array}
   \right.
   \label{eq:expthr1}
\end{align}
where 
\begin{align}
\eta_n(k_n) = (N-n+1)(\sum_{k'=k_n+1}^K{p_{k'}}-q)
\label{eq:expthr2}
\end{align} $\forall n=\{1,...,N\}$ and users of type $k$ at each slot $n$, $k_n\in \{1,...,K\}$ are arranged such that priority of a user ($v_n/w_n$) increases with increasing $k_n$.  

To examine the performance of the expected threshold policy in Section (\ref{sec:results}), two more benchmark heuristics are also defined as follows: 
\begin{definition}
\label{def:greedy}
Greedy policy is a policy that serves an encountered user whenever there is available energy to serve it; that is $x_n=1$ iff $e_n\geq w_n$.
\end{definition} 
\begin{definition}
\label{def:cons}
Conservative policy is a policy that serves only the best class of user in terms of $value/weight$ when there is available energy to serve it, i.e. $x_n=1$ iff $v_n/w_n=max_{k=1,\ldots,K}{v_k/w_k}$ and $e_n\geq w_n$.
\end{definition}

\subsection{A Performance Upperbound }

An upper-bound performance analysis can be conducted considering the total expected reward (value) $V_N$ for two user type case (i.e., $(v_1,w_1), (v_2,w_2)$) over a finite horizon $N$. $V_N$ can be written as:
\begin{align}
V_N=\displaystyle\sum_{n=1}^{N}{v_1x_n(1)+v_2x_n(2)}\label{eq:total_value}
\end{align} where $x_n(k_n)\in \{0,1\}$ is the decision whether to serve user type  $k_n\in\{1,2\}$ or defer it, such that $v_1/w_1 \geq v_2/w_2$. Also note that $x_n(1)+x_n(2)\leq 1$, $\forall n$. 
Total energy consumption rate should be no greater than the replenishment rate, therefore,
\begin{align}
\displaystyle\sum_{n=1}^{N}{w_1x_n(1)+w_2x_n(2)}\leq \displaystyle\sum_{n=0}^{N-1}q
\end{align} This inequality can be rewritten as:
\begin{align}
\displaystyle\sum_{n=1}^{N}{x_n(2)}\leq \displaystyle\sum_{n=0}^{N-1}q/w_2-\displaystyle\sum_{n=1}^{N}w_1/w_2x_n(1)\label{eq:xn2}
\end{align} After substituting (\ref{eq:xn2}) into (\ref{eq:total_value}):
\begin{align}
V_N\leq (v_1-v_2w_1/w_2)\displaystyle\sum_{n=1}^{N}{x_n(1)}+ \displaystyle\sum_{n=0}^{N-1}v_2q/w_2
\end{align}
Since the average amount of energy consumed by serving \textit{user type 1} is limited by either the average amount of energy harvested till the end of horizon ($\displaystyle\sum_{n=0}^{N-1}q$) or the average amount of energy requested  ($\displaystyle\sum_{n=1}^{N}p_1w_1$), the following bound is attained: 
\begin{align}
\displaystyle\sum_{n=1}^{N}x_n(1)\leq min\{\displaystyle\sum_{n=0}^{N-1}q/w_1,\displaystyle\sum_{n=1}^{N}p_1\}
\end{align} 
Therefore, the total expected reward has a performance upper-bound as follows:

\begin{align}
V_N \leq \left\{
     \begin{array}{lr}
       \frac{v_1qN}{w_1} & : min\{p_1,\frac{q}{w_1}\}=\frac{q}{w_1} \\
       \frac{v_2qN}{w_2}+(v_1-\frac{v_2w_1}{w_2})p_1 & : min\{p_1,\frac{q}{w_1}\}=p_1 
     \end{array}
   \right.
\end{align}

\section{Service Policy of APOM through a Knapsack Formulation over Deterministic Model}

\label{sec:deter}

APOM has to adopt an efficient and fast decision making strategy as a new user demand appears. In such problems, if a well defined threshold could be stated, then the threshold based  decision mechanism gives a satisfactory result in terms of overall performance and computational complexity. Hence, we shall mainly look for threshold based schemes which provably exhibit experimentally strong performance.

\subsection{An Online Policy with Deterministic Threshold Method}

\label{det_a}

Threshold based decision rules are examined in this section, where values and weights of the encountered users are compared with a time-varying threshold. In addition to time, the threshold may also be a function of the fraction of remaining capacity in the battery. To consider the deterministic online knapsack problem in a threshold based scheme, upper and lower bounds on the user rate, energy requirement and energy harvesting will be assumed, which are not unrealistic considering practical correspondents to these limitations exist.

The cost efficiency, \textit{value/weight}   ($v/w$), will be the critical decision metric for each user. The instantaneous threshold is defined as a monotonic increasing function of $z_n=\frac{\sum_{m=1}^nx_mw_m}{B}$, the fraction of the capacity used up by the $n^{th}$ slot. Following \cite{Zhou2008}, where an optimal threshold scheme was developed for the static capacity 0/1 KP, we restrict attention to the case where the value/weight values are upper and lower bounded by $U, L > 0$, i.e. $L\leq \frac{v}{w}\leq U$, and define the threshold function as\footnote{The form of this threshold function is found through linear programming and shown to achieve an optimal competitive ratio in \cite{Zhou2008}}:
\begin{align}
\Psi(z) = (\frac{Ue}{L})^z\frac{L}{e} \hspace{0.1in} where \hspace{0.1in} L\leq \frac{v}{w}\leq U
\end{align}where $\textbf{e}$ denotes the natural logarithm. At each slot $n$, the value/weight value of the upcoming user is compared with the threshold $\Psi(z_n)$.  The threshold based decision rule is the following:

\noindent Accept user $n$ provided it does not violate the current remaining knapsack capacity and $v_n/w_n\geq \Psi(z_n)$.

For a static KP, $z_n$ thus the threshold is monotone nondecreasing, which corresponds to being more willing to include users early on, and being very selective as $z_n$ increases toward 1. In our problem, the knapsack capacity is not static but gets incremented at arbitrary instants, at arbitrary amounts. The detailed extension of the threshold function with the optimal competitive ratio is revealed in Section \ref{append} for the dynamic capacity knapsack problem provided some  preconditions. Extending the above threshold to the two extreme cases where,
\begin{itemize}
  \item Complete information about the incremental amounts is available.
  \item No information about the increment amounts is available.
\end{itemize}
  the fraction $z_n$ may be defined in two different ways.

\begin{definition} Monotone Threshold. Define $z_{\rm{mon},n}=\frac{\sum_{m=1}^{n}x_mw_m}{B}$, where $B$ is the total amount of energy $B=B_0+B_1+...+B_{J-1}$ collected from all harvests.  
\label{def_mon}
\end{definition}

As an alternative way to the monotone threshold approach, we define a ``Jumping Threshold" as a piecewise monotone function of the current fraction in each energy harvest interval. It utilizes the the amount of energy harvested up to that time instant as the denominator of the fraction.

\begin{definition}
Jumping Threshold. For each $n$, let $J(n)$ be the time of the last harvest before time $n$. The fraction  of filled capacity at time $n$ is defined as $z_{\rm{jump},n}=\frac{\sum_{m=1}^{n}x_mw_m}{B_0+\ldots+B_{J(n)}}$.
\end{definition}

Clearly, this second threshold function is monotone nondecreasing between harvest instants, and jumps down whenever a new harvest occurs. As opposed to the latter threshold function which assumes prior knowledge of all harvest amounts over the problem horizon, this one is an online algorithm by construction.

A common success metric for a deterministic online algorithm is its \textit{competitive ratio}, the worst-case ratio of the algorithm's performance to the optimal offline solution under the same input.  An online algorithm $A$  for a user sequence $\gamma$ that is $\alpha$-competitive satisfies the following:
\begin{align}
\frac{OPT(\gamma)}{A(\gamma)}\leq\alpha, \hspace{0.1in} {\rm{where}} \hspace{0.1in} \alpha \geq 1
\end{align}
where $OPT(\gamma)$ and $A(\gamma)$ are the values obtained from optimal offline algorithm and the proposed online heuristic $A$ respectively. Having complete uncertainty in the input, the heuristic proposed should build solutions with a competitive ratio better than the worst-case ratio by $\alpha$. 

\begin{remark}
%\textbf{Theorem 1.}
Under the condition $\sum_{m=1}^{N_1}x_mw_m\leq B_0$, Monotone Threshold guarantees a competitive ratio no more than $\ln(U/L)+1$ assuming two energy harvest intervals, i.e. $J=2$.
\label{remark}
\end{remark}

If the total amount of energy collected from energy harvests
$B=B_0+B_1$  is considered as a static capacity while computing the fraction $z_{\rm{mon},n}$ as in Definition \ref{def_mon}, then the competitive ratio of Monotone Threshold corresponds to the the same constant competitive ratio which is optimal in the case of  online KP with static and presumably high capacity.   The proof for Remark \ref{remark} is given in Section \ref{append} since it is mainly an extension of the proof for online KP with static capacity in \cite{Zhou2008} to the dynamic capacity case. Next, we will propose different threshold generation methods using different optimization tools, namely genetic algorithms and fuzzy logic. 

%Detailed simulation results and numerical  competitive ratio analysis of Monotone and Jumping Threshold with respect to other proposed online heuristics functions are illustrated in Section \ref{sec:results}.  

\subsection{Threshold method via Genetic Optimization}

Genetic Algorithm (GA) is a widely used search heuristic, also called a metaheuristic, that uses the process of natural selection as a model. In the computational science, engineering, bioinformatics,  economics,  manufacturing, mathematics, physics and many other fields,  this heuristic is utilized for  optimization purposes to various problems. GA is a widely applied technique for optimization and search problems, especially NP-hard ones including KPs. Basically, candidate solutions are stochastically selected, recombined, mutated, either eliminated or retained based on relative fitness; even when the original problem is based upon a deterministic model.

We propose the implementation of this stochastic approach to our deterministic problem with the twist that the knapsack capacity may also change as  solutions evolve toward better ones in time. Thus, generation adaptation and the capacity incrementation need to be jointly taken into account. To apply GA on a fraction based scheme, a chromosome is chosen as a vector that defines a threshold for each region of values the fraction may take. For this purpose, the values that  remaining fraction of capacity ($z$) can take are quantized in the following manner: The range of fraction ($[0,1]$) is divided into equal regions as $[t_1  t_2 ... t_{1000}]$, where $t_i$ corresponds to the threshold for region $i$. The threshold as the outcome is optimized over a randomly generated expected  user sequences as follows:

\begin{align}
\psi(z)=t_i, ~\mbox{where}~z\in [\frac{i-1}{1000}, \frac{i}{1000}].  
\end{align}

A quantization over $1000$ interval is quite sufficient, providing an opportunity to sweep over a wide range. A number of chromosomes are randomly generated at the beginning and their corresponding competitive ratios are found through the fitness function evaluation. The fitness function checks the energy constraint on the available capacity at each step. In addition, capacity is updated at each energy replenishment, so is the fraction $z$.  

The threshold method via genetic optimization provides a constant optimized threshold over the randomly generated user pools and the real time user demands are served using this threshold. The observations on the fraction based method on the natural selection of the best users over generations produce a certain competitive ratio in the best and the worst cases for randomly generated parent sequences, provided and discussed in Section \ref{sec:results}. 

In genetic algorithm, the solution approaches to global optimum as long as  proposed algorithm operates over the actual user profile. Since apriori knowledge on the user sequences is not available,  threshold via GA prevails the result for a expected user profile over a precise time interval. To increase the efficiency of the decision mechanism, an adaptive threshold policy is proposed using rule based optimization in the following section.

\subsection{Threshold method via Rule Based Optimization}

\label{sec:rule}
A connected set of well defined rules, consisting of related variables in both the propositions and consequences, can handle uncertain knowledge successfully in decision problems. Although rule based approaches have been implemented in quite a few resource allocation problems in the literature \cite{Alkesh2011}, we have come across no previous studies on the threshold determination via this method. 
\begin{table}
\centering
\caption {Membership Rules of 5 Degrees for Threshold Determination Belonging to the Membership Functions}
\label{tab:rule}
\begin{footnotesize}
    \begin{tabular}{|l|l|l|}
     \hline
    Energy Harvest Closeness & Capacity Fullness & Threshold \\ \hline
    Very-Near         & Very-High                 & Med       \\ \hline
    Very-Near         & High                   & Low       \\ \hline
    Very-Near         & Med                    & Low       \\ \hline
   Very-Near         & Low                    & Very-Low     \\ \hline
    Very-Near         & Very-Low                  & Very-Low     \\ \hline
    Near           & Very-High                 & High      \\ \hline
    Near           & High                   & Med       \\ \hline
    Near           & Med                    & Low       \\ \hline
    Near           & Low                    & Very-Low     \\ \hline
    Near           & Very-Low                  & Very-Low     \\ \hline
    Med            & Very-High                 & High      \\ \hline
    Med            & High                   & Med       \\ \hline
    Med            & Med                    & Med       \\ \hline
    Med            & Low                    & Low       \\ \hline
    Med            & Very-Low                  & Very-Low     \\ \hline
    Far            & Very-High                 & Very-High    \\ \hline
    Far            & High                   & High      \\ \hline
    Far            & Med                    & High      \\ \hline
    Far            & Low                    & Low       \\ \hline
    Far            & Very-Low                  & Low       \\ \hline
    Very-Far          & Very-High                 & Very-High    \\ \hline
    Very-Far          & High                   & Very-High    \\ \hline
    Very-Far          & Med                    & High      \\ \hline
    Very-Far          & Low                    & Med       \\ \hline
    Very-Far          & Very-Low                  & Low       \\ \hline
    \end{tabular}
    
\end{footnotesize}
\end{table}

%Details of the rule based system constructed for APOM are presented in Table \ref{tab:rule}. 
There are two input memberships functions (MF) assigned to define the decision strategy in each possible case for APOM. Both of the input MFs are defined as trapezoidals of 5 degrees. The output MF is assigned as the desired change in the threshold, the ultimate trend of which will  be used to determine which users to serve eventually. One of the input membership functions is chosen as the closeness to energy harvest instants in terms of the number of user arrivals. This parameter is prominent in real life scenarios since expecting an energy harvest sooner or at a far instant may completely alter action to be taken at that slot. Once, the harvest instant gets closer and closer, the service provider should adopt a greedy attitude since it would serve as  long as its service capacity allows it to do. This metric is chosen to vary between $[0,1]$ where the values closer to 1 denotes that an energy arrival is presumed to happen soon, presented as \textit{Very-Near}. Similarly, \textit{Very-Far} stands for the user arrivals at the beginning of an energy harvest interval where the input MF is set to be in the vicinity of $0$. In addition to the energy replenishment rate, the fraction of the utilized energy of available capacity is a critical measure as well. Thus, the second MF is assigned as the depletion of available energy of APOM. The values vary between $[0,1]$ interval same as the first MF function, ranging from \textit{Very-Low} to \textit{Very-High} in 5 levels.

The behaviour of the threshold function is determined  as shown in Figure \ref{fig:fuzzy1} following the well calibrated rules by using the input MFs in Figures \ref{fig:harvest} and \ref{fig:frac} and the output MF in Figure \ref{fig:thre}. Once the rules are tested, the overall performance is increased using  calibrated MF parameters effectively via \textit{fine-tuning}. 

It should also be noted that the improved performance of this heuristic is largely related with the enlarged problem dimension. The accuracy of decisions leads to an improved utility maximization performance through proposing a 3D solution to a 2D problem, obviously at an increased complexity.
\begin{figure}
\centering
\includegraphics[scale=0.26]{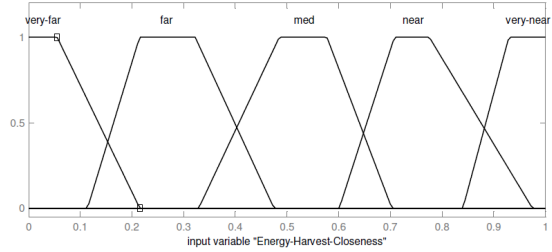}
\caption{Input membership function of energy harvest closeness}
\label{fig:harvest}
\end{figure}
\begin{figure}
\centering
\includegraphics[scale=0.26]{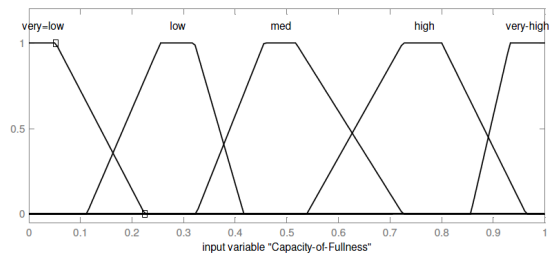}
\caption{Input membership function of fraction of capacity fullness}
\label{fig:frac}
\end{figure}
\begin{figure}
\centering
\includegraphics[scale=0.22]{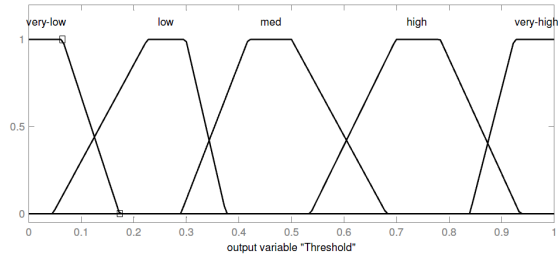}
\caption{Output membership function of threshold}
\label{fig:thre}
\end{figure}
\begin{figure*}
\centering
\includegraphics[scale=0.8]{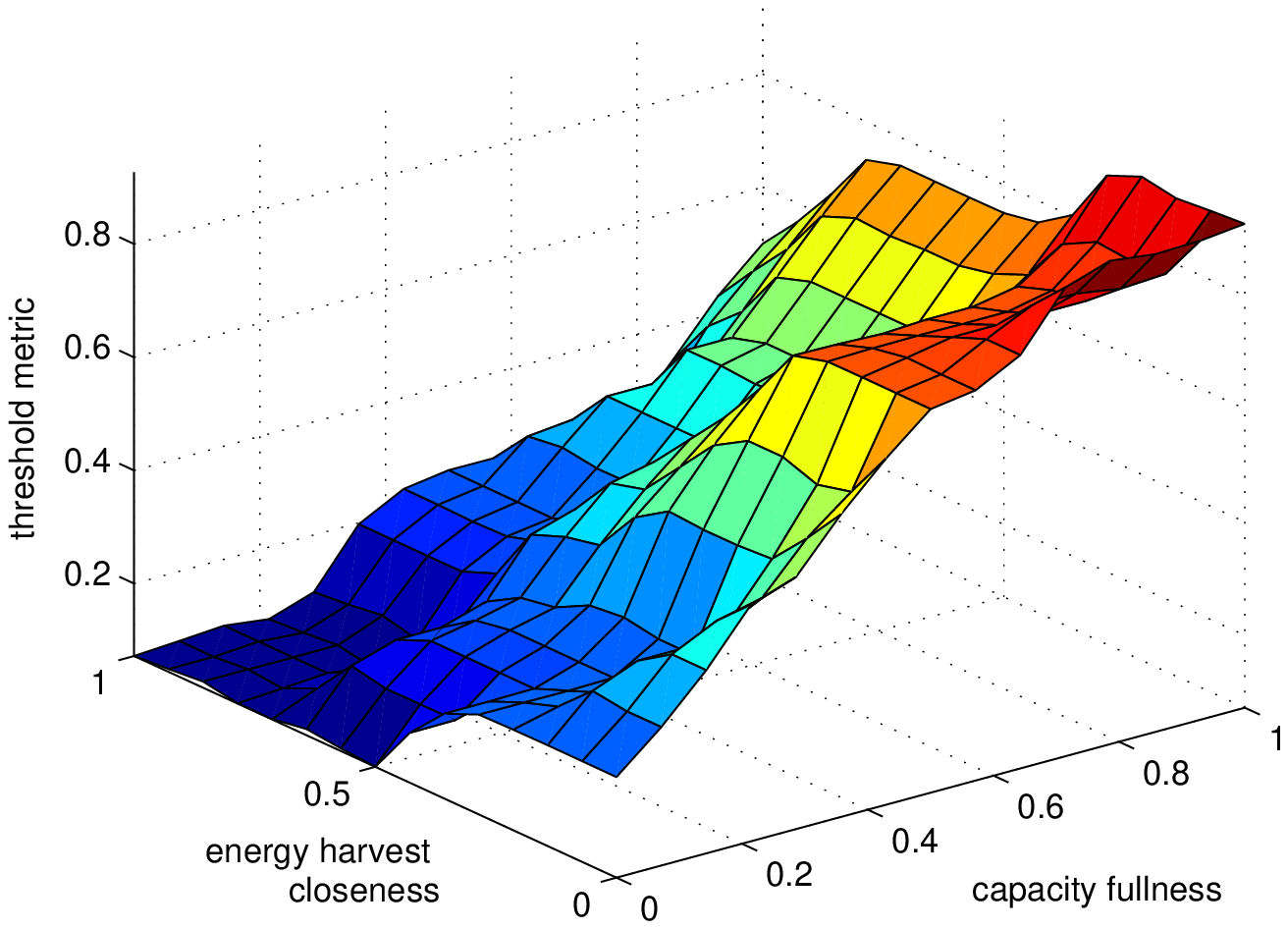}
\caption{Surface graph of threshold function attained via Rule Based Algorithm described in Section  \ref{sec:rule}  indicating the behaviour of output threshold function with respect to input membership functions}
\label{fig:fuzzy1}
\end{figure*}

Five degree MFs for \textit{energy harvest closeness} and \textit{capacity fullness} provide sufficient performance as the decision metric for APOM. The overall threshold system complexity and decision efficiency could be increased via MFs of 7 degree rules. 

In Section \ref{deterresults}, different threshold-based admission mechanisms are investigated and compared on their performance regarding the total utility (rate) they provide. Competitive ratio analysis is used to test the performance of the online algorithms over a deterministic model.

\section{Numerical and Simulation Results}

\label{sec:results}

\subsection{Stochastic Service Policy Optimization Related Results}

Both optimal and suboptimal policies behave more conservative at the beginning, and become more greedy through the end of time horizon. As illustrated in Figures \ref{fig:comp3} and \ref{fig:comp4}, the Expected Threshold Policy performs very close to optimal. Drawbacks of purely greedy and conservative policies are also evident on the figures. Figure \ref{fig:comp3} shows that as the efficient users appear with higher probability, conservative policies outperform the greedy ones considering the expected total utility. On the other hand, when the inefficient users appear with higher probability, greedy policies are more advantageous  than the conservative approach. However, Expected Threshold Policy proposed in this paper is robust against the variations in user distributions. 
\begin{figure*}
\begin{psfrags}
\centering
   % \psfrag{Harvest Probability (q)}[l]{Harvest Probability (q)}	
    %\psfrag{Expected Total Value over 100 Slots}[l]{Expected Total Value over 100 Slots}
\includegraphics[scale=0.7]{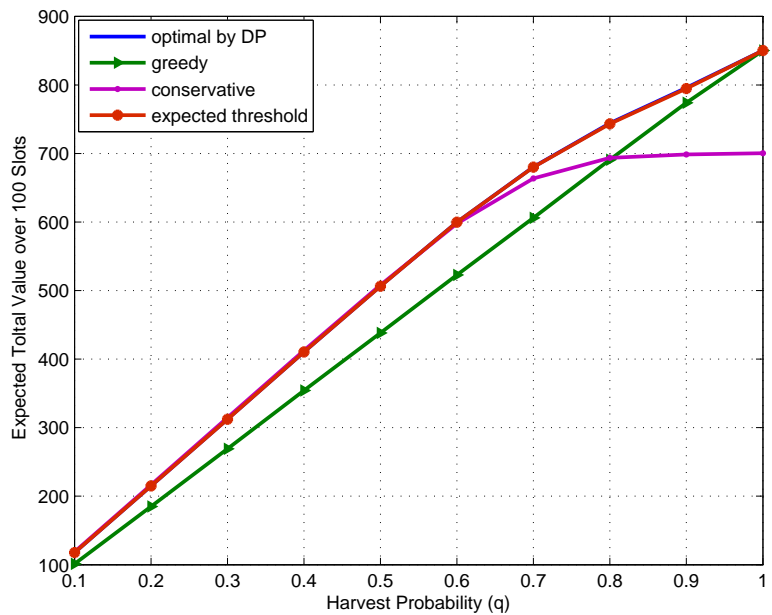}
\end{psfrags}
\centering
\caption{The comparison of the performances for Expected Energy Threshold Policy, Greedy Policy and Conservative Policy wrt. optimal policy when available energy=5, $N=100$, $K=2$ for two different user types with efficiency ratios 10 and 5 (best users appear with high probability e.g. 0.7) }
\label{fig:comp3}
\end{figure*}
\begin{figure*}
\centering
\begin{psfrags}
    %\psfrag{Harvest Probability (q)}[l]{Harvest Probability (q)}	
   %\psfrag{Expected Total Value over 100 slots}[l]{Expected Total Value over 100 slots}
\includegraphics[scale=0.7]{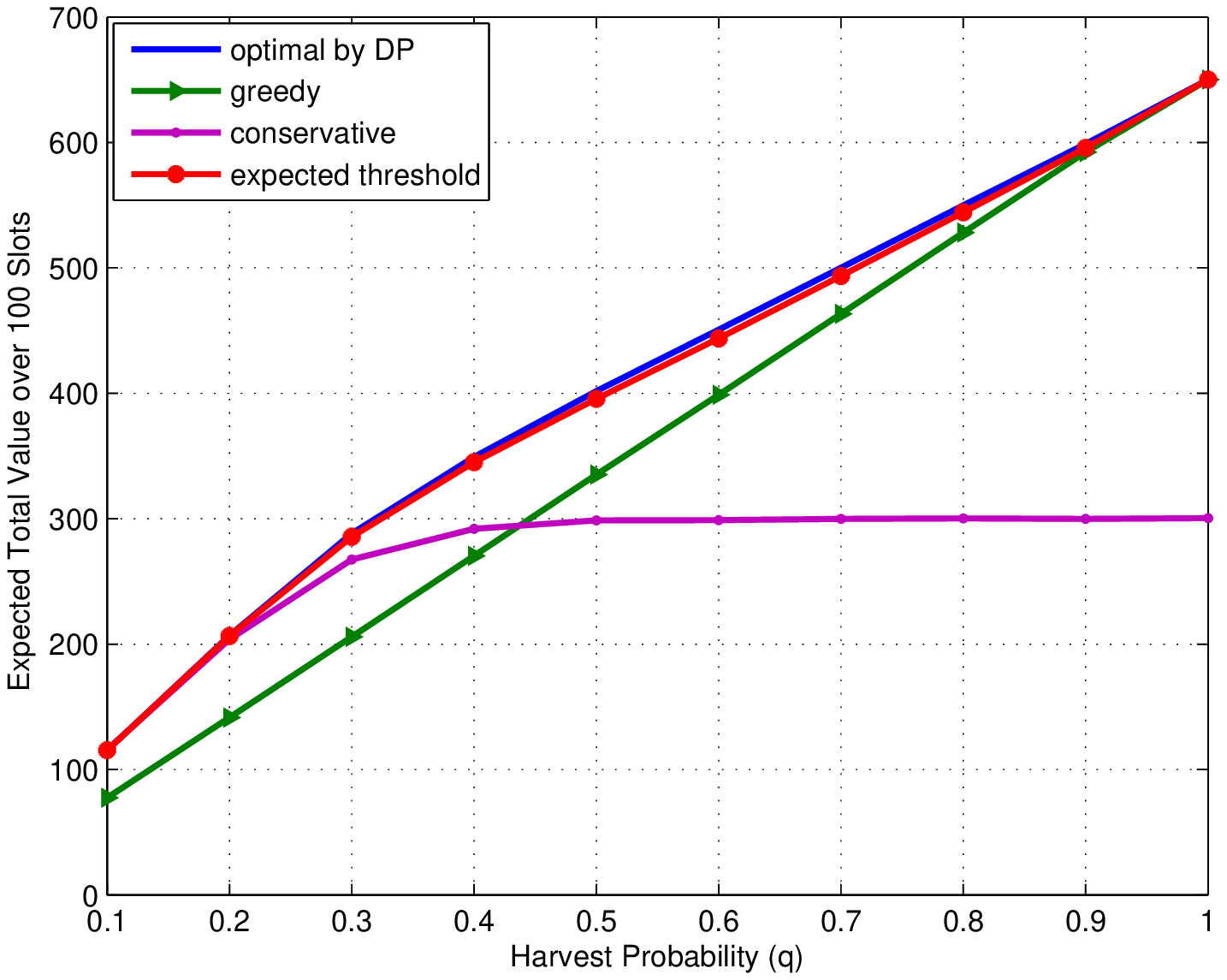}
\end{psfrags}
\centering
\caption{The comparison of the performances for Expected Energy Threshold Policy, Greedy Policy and Conservative Policy wrt. optimal policy when available energy=5, $N=100$, $K=2$ for two different user types with efficiency ratios 10 and 5 (worst users appear with high probability e.g. 0.7) }
\label{fig:comp4}
\end{figure*}
\begin{figure*}
\centering
\begin{psfrags}
    %\psfrag{Harvest Probability (q)}[l]{Harvest Probability (q)}	
    %\psfrag{Expected Total Value over 100 slots}[l]{Expected Total Value over 100 slots}
\includegraphics[scale=0.8]{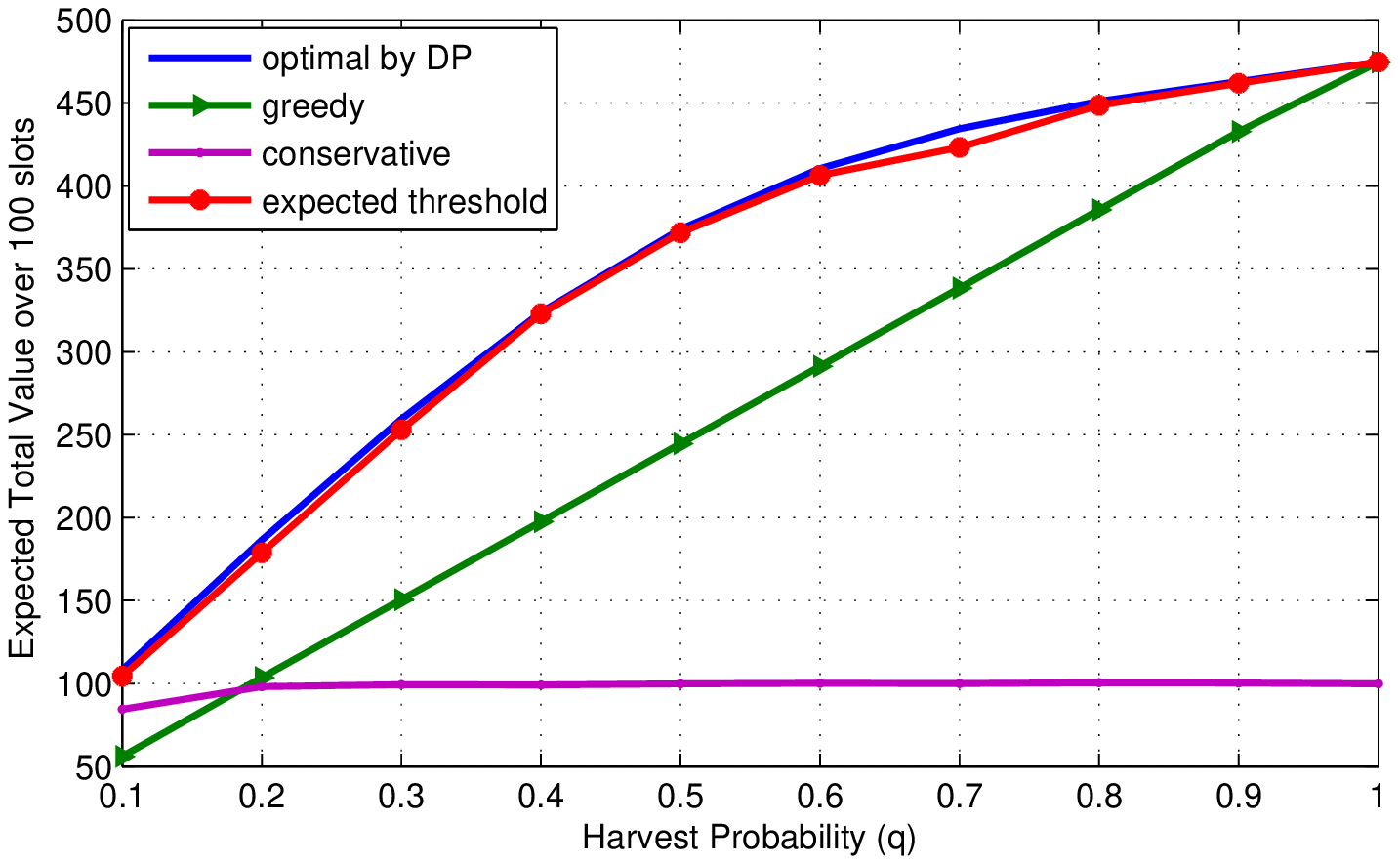}
\end{psfrags}
\centering
\caption{The comparison of the performances for Expected Energy Threshold Policy, Greedy Policy and Conservative Policy with respect to optimal policy when available energy=5 at the beginning, $N=100$, $K=5$ for five different user types with equal weights }
\label{fig:five_user}
\end{figure*}

In order to investigate more general scenarios, number of  user types ($K$) is increased to 5 and simulations have been conducted on the users with both equal weights (energy demand)  as in Figure \ref{fig:five_user} and  different weights as in Figure \ref{fig:five_user_diff}. In Figure \ref{fig:five_user_diff}, simulations have been conducted for five different user types with cost efficiency ($v/w$) ratios given as [10/1, 5/1, 8/4, 5/8, 2/6 1/5] appearing with probabilities [0.3, 0.15, 0.15, 0.3, 0.1]. As it can be seen from Figures \ref{fig:five_user} and \ref{fig:five_user_diff}, Expected Threshold Policy performs quite close to the optimal.

Considering the equal weight scenario, the greedy policy performance is much higher than the conservative one since the unity weight cost does not put a strain on the capacity constraints. However,  the results in Figure \ref{fig:five_user_diff} reveals that the adaptive expected threshold policy outperforms the conservative and greedy policies over user sequences with random weights and $v/w$ cost efficiency ratios.
\begin{figure*}
\centering
\begin{psfrags}
    %\psfrag{Harvest Probability (q)}[l]{Harvest Probability (q)}	
    %\psfrag{Expected Total Value over 100 slots}[l]{Expected Total Value over 100 slots}
\includegraphics[scale=0.8]{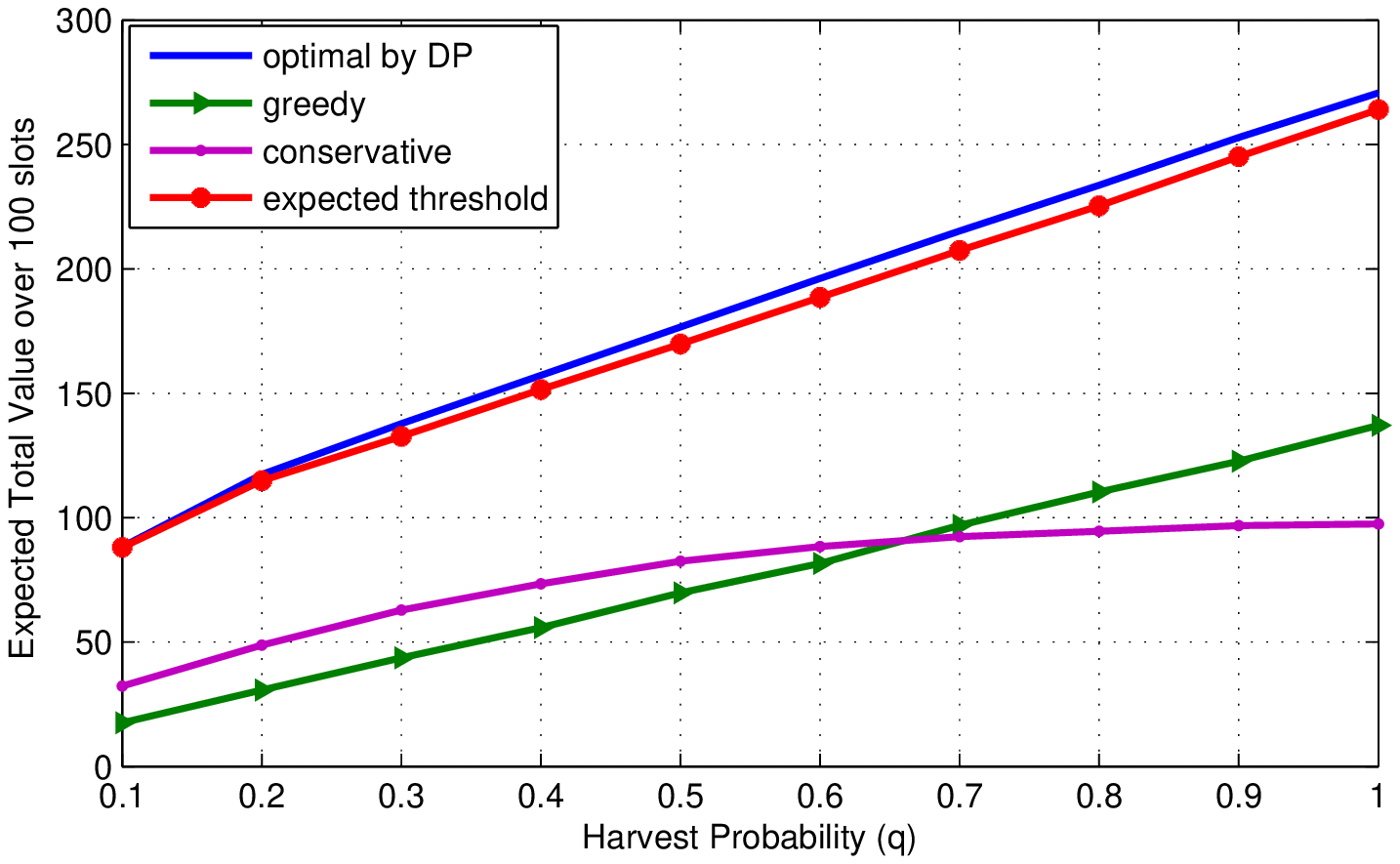}
\end{psfrags}
\centering
\caption{The comparison of the performances for Expected Energy Threshold Policy, Greedy Policy and Conservative Policy with respect to optimal policy when available energy=5 at the beginning, $N=100$, $K=5$ for five different user types with different efficiency (value/weight) ratios and different weights }
\label{fig:five_user_diff}
\end{figure*}

\subsection{Deterministic Service Policy Optimization Related Results}
\label{deterresults}
In this section, competitive ratios of the various policies analysed in the previous sections will studied. As a benchmark, the offline optimal policy will be used, hence the values obtained will be overestimating the competitive ratio with respect to the online optimal for each case. In other words, the competitive ratios of the algorithms may actually perform better than the estimates found here. 

The simulation results shown in Tables \ref{tab:1} and \ref{tab:2} are obtained for the case of predicted  energy harvests. User efficiency ratios take uniformly distributed random values within the interval $6=L\leq \frac{v}{w}\leq U=10$.  The knapsack capacity (energy available at the start of the horizon) is taken as  $2000mJ$ considering one cell for energy harvesting. $1000$ Monte Carlo trials are conducted, generating $1000$ user arrivals on each trial. Results illustrated in Tables \ref{tab:1} and \ref{tab:2} show that even the worst-case competitive ratio never exceeds $1.75$. Moreover, the results for the monotone threshold function are consistent with the worst possible competitive ratio stated in Section \ref{det_a}. Among the tested algorithms, the rule based threshold method has the strongest performance, achieving the lowest worst case competitive ratio.

Next, the energy harvest patterns are considered in a more realistic scenario where the overall resource allocation problem is examined over $10$ energy harvests, assumed to occur in a 24-hour cycle. Also, distinct amounts of the harvests are assumed in this case and assigned arbitrarily to model the potential weather condition changes and different locations of the APOM. The competitive ratio analysis for all of the threshold function methods proposed above yield the results shown in the performance graph \ref{fig:comp1}. The worst-case results illustrated in Figure \ref{fig:comp1} reveal that the Monotone Threshold function and Rule Based Threshold function present closer performance to each other as the user value/weight characteristics are more diverse. However, the rule based threshold provides the best competitive ratio for a less distinct user set as the user diversity ratio ($U/L$) approaches to $0.9$ in the worst-case analysis.

For the average case performance, outputs given in Figure \ref{fig:comp2}, the rule based, monotone and jumping thresholds ensure a similar competitive ratio but the rule based threshold approaches to the optimal solution as the analysis is conducted over a user sequence of similar characteristics.
\begin{table*}
\centering
\caption {Comparison of competitive ratios for different threshold generation methods for 1000 users and  capacity=2000mJ }
    \begin{tabular}{|l|l|l|l|}
    \hline
    Threshold Method & Average competitive ratio &  Worst competitive ratio  & Best competitive ratio \\ \hline
    Monotone threshold & 1.1084              & 1.3100           & 1.0640             \\ \hline
    Jumping threshold            & 1.3700              & 1.7200           & 1.3500            \\ \hline
    GA based threshold                 & 1.1422              & 1.5102           & 1.1087            \\ \hline
    Rule based threshold                              & 1.0362              & 1.2066           & 1.0229            \\ \hline
    \end{tabular}
    \label{tab:1}
\end{table*}
\begin{table*}
\centering
 \caption {Comparison of performances for different threshold generation methods with optimal offline solution for 1000 users and  capacity=2000mJ }
    \begin{tabular}{|l|l|l|l|}
    \hline
    Threshold Method & Average total value & Worst total value & Best total value \\ \hline
    Offline optimal soln.                          & 17599               & 17167             & 18050            \\ \hline
    Monotone threshold & 15880               & 13374             & 16647            \\ \hline
    Jumping threshold            & 12778               & 10221             & 13103            \\ \hline
    GA based threshold                 & 15416               & 11581             & 16042            \\ \hline
    Rule based threshold                              & 17003               & 14524             & 17163            \\ \hline
    \end{tabular}
    \label{tab:2}
\end{table*}
\begin{figure*}
\centering
\begin{psfrags}
\includegraphics[scale=0.4]{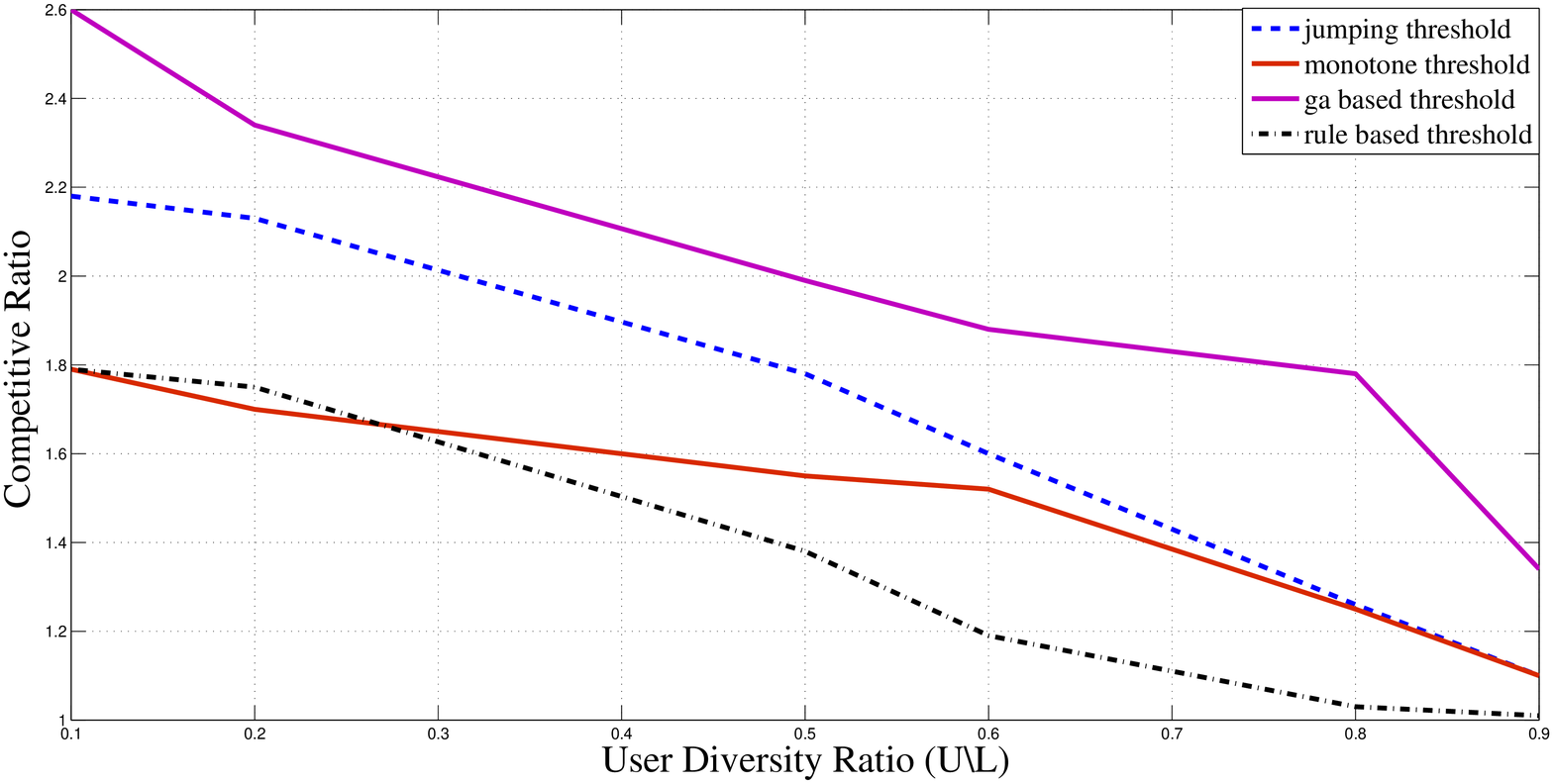}
\centering
\end{psfrags}
\caption{Performance Evaluation of Different Online Threshold Heuristics vs. Diversity in Users Characteristics:  \textit{Worst Case Competitive Ratio Analysis},  Monte Carlo Simulation of $1000$ runs over a Randomly Generated User Sequence of $N=1000$ Users  under $J=10$ Energy Harvests of Different Amounts Modelled over 24-Hour with APOM Capacity Constraint of $2000mJ$}
\label{fig:comp1}
\end{figure*}
\begin{figure*}
\centering
\begin{psfrags}
\includegraphics[scale=0.4]{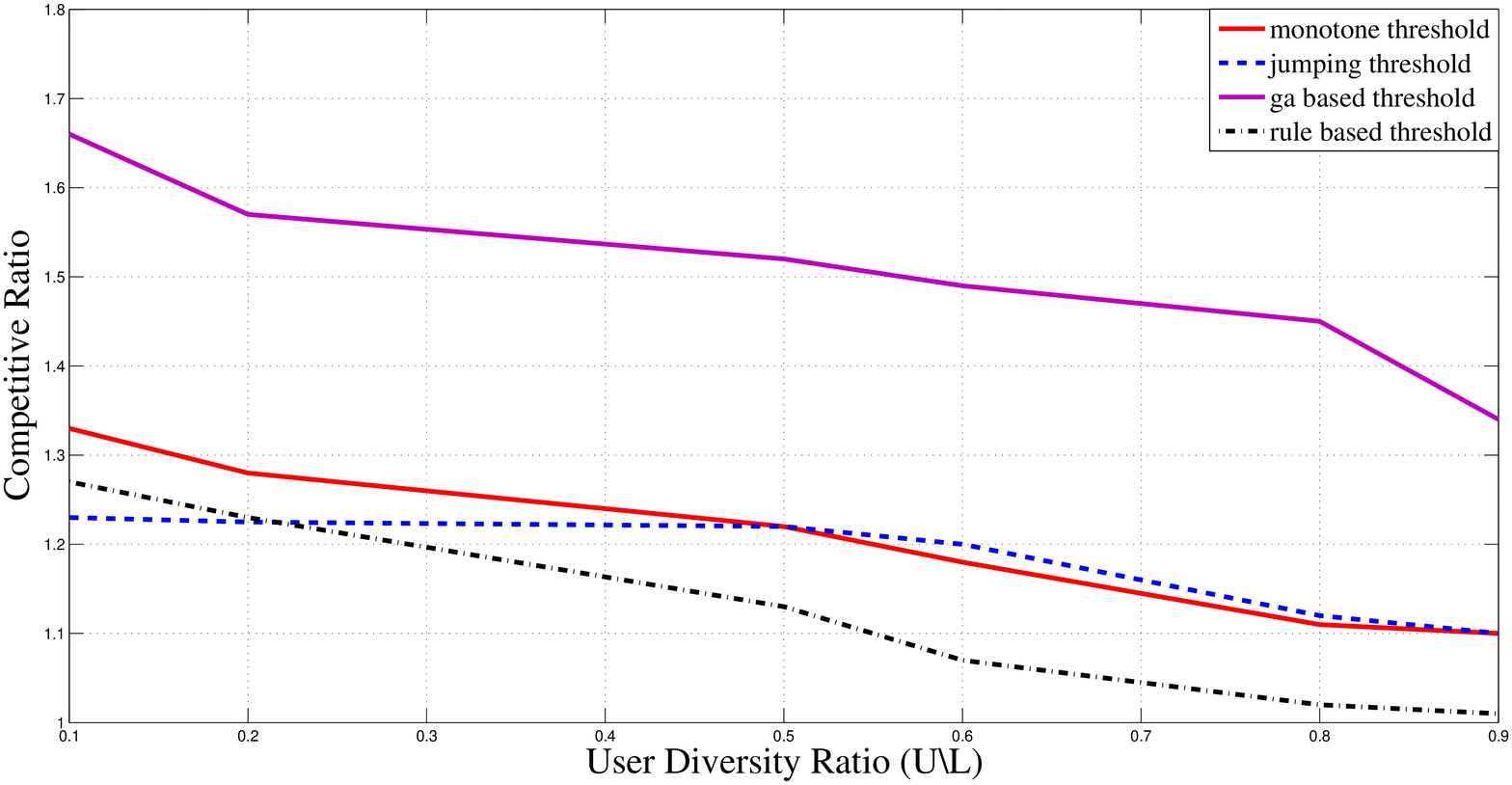}
\centering
\end{psfrags}
\caption{Performance Evaluation of Different Online Threshold Heuristics vs. Diversity in Users Characteristics: \textit{Average Case Competitive Ratio Analysis},   Monte Carlo Simulation of $1000$ runs over a Randomly Generated User Sequence of $N=1000$ Users  under $J=10$  Energy Harvests of Different Amounts Modelled over 24-Hour with APOM Capacity Constraint of $2000mJ$}
\label{fig:comp2}
\end{figure*}

\section{Conclusion}
\label{sec:conc}

In this paper, we addressed the  online user admission problem for an \textit{Access Point On the Move}. Due to the nature of the problem the overall model is structured above a knapsack problem (KP) with dynamic and incremental capacity as the energy of the access point gets replenished at arbitrary time instances while responding to random user demands. We investigated the problem under two different setups where energy and user arrivals are modelled stochastically as well as deterministically. The optimality and structure of a threshold based solution to the stochastic problem was shown, and a computationally friendly ``Expected Threshold Policy" was shown to well approximate the optimal DP solution. On the deterministic side, we considered adaptive threshold based policies  where a user is admitted if its utility to weight ratio exceeds a certain threshold which may be static or dynamic. A competitive ratio was exhibited  for a monotone threshold for the two-harvest scenario. In addition to extended online threshold functions based on  previous literature, threshold functions using Rule Based approach and a Genetic Algorithm are also developed. 

As far as proposed heuristics are concerned, dynamic programming and GA approach propose the best solution if the user  statistics are available beforehand. Since we dealt with developing online user admission mechanism over random user profiles, rule based method yields the best performance using the calibrated rules over  all possible user sequences. Experimental results demonstrate that the proposed decision methods using different threshold functions for the resource allocation problem of APOM are efficient in achieving close to optimal competitive ratios in addition to low computational complexity. 

%All the proposed scalable solutions can be applicable to other instances of online KP with incremental capacity. Furthermore, as a future work the resolution of GA method could be increased while reaching the actual fitness criteria instead of proposing computationally efficient heuristics. 

\section{Appendix}
\label{append}
\begin{proof of Theorem*}
When we restrict attention to the case where
the value/weight values of encountered users are upper and lower bounded by $U, L > 0$, i.e. $L\leq \frac{v}{w}\leq U$, the total amount of energy $B=B_1+B_2+...+B_J$ collected from all harvests are considered as if there is a static energy capacity $B$ then the proof of constant competitive ratio for ``Monotonic Threshold" reduces to the proof of competitive ratio for online knapsack problem with static capacity described in \cite{Zhou2008}. Following the steps given in \cite{Zhou2008}, for any input sequence of $\sigma$ after some time including energy harvests, let the algorithm terminate filling $Z$ fraction of the total capacity (total amount of energy harvests until that instant). Let $S$ and $S^*$ denote the set of selected users by the Monotone Threshold method and the offline optimal algorithm respectively, that is $x_j=1 \mbox{ if } j\in S \mbox{ and } x^*_j=1 \mbox{ if } j\in S^* $. In (\ref{eq:proof1}) and (\ref{eq:proof2})  the weight and value of the  common items in both sets are assigned to the variables $W$ and $V$. Then, the proof of deterministic competitive ratio is given as follows:
\begin{align}
\displaystyle\sum_{j \in (S \cap S^*)}w_j\triangleq W
\label{eq:proof1}\\
\displaystyle\sum_{j \in (S \cap S^*)}v_j\triangleq V
\label{eq:proof2}
\end{align}
An upper bound is needed to be defined on the total value of optimal algorithm. Therefore, since all the users to be selected by the optimal algorithm but not by the Monotone Threshold Algorithm have value over weight ratios smaller than the threshold at that instant and threshold is an increasing function, we have an upper bound as:
\begin{align}
OPT(\sigma)\leq V+\psi(Z)(B-W)\\
\frac{OPT(\sigma)}{A(\sigma)}\leq\frac{V+\psi(Z)(B-W)}{V+v(S\setminus S^*)}
\end{align}
Using the threshold function we may define upper bounds for the common total value parameter $V$ and remaining total value of optimal algorithm as $V_1$ and $V_2$ respectively.
\begin{align}
V\geq\displaystyle\sum_{j \in (S \cap S^*)}\psi(z_j)w_j\triangleq V_1
\\
v(S\setminus S^*)\geq\displaystyle\sum_{j \in (S \setminus S^*)}\psi(z_j)w_j\triangleq V_2
\end{align} Then the competitive ratio can be found as:\begin{align}
\frac{OPT(\sigma)}{A(\sigma)}\leq\frac{V+\psi(Z)(B-W)}{V+v(S\setminus S^*)}
\\\leq\frac{V_1+\psi(Z)(B-W)}{V_1+v(S\setminus S^*)}\leq\frac{V_1+\psi(Z)(B-W)}{V_1+V_2}
\\
\frac{OPT(\sigma)}{A(\sigma)}\leq\frac{\psi(Z)B}{\displaystyle\sum_{i \in S}\psi(z_j)w_j}\leq\frac{\psi(Z)}{\displaystyle\sum_{i \in S}\psi(z_j)\Delta z_j}
\label{eq:pro2}
\end{align}
Then, the assumption of encountering very small weights with respect to the capacity is used and $\Delta z_j$ is defined as follows:
\begin{align}
\Delta z_j \triangleq z_{j+1}-z_j=w_j/B   ~\forall j
\\
\displaystyle\sum_{j \in S}\psi(z_j)\Delta z_j \cong\int_{0}^{Z}\psi(z)dz \\
=\int_{0}^{c}Ldz+\int_{c}^{Z}\psi(z)dz\\
=\frac{L}{e}\frac{(Ue/L)^z}{ln(Ue/L)}=\frac{\psi(z)}{ln(U/L)+1}
\label{pro3}
\end{align}
Finally, when the obtained result of (\ref{pro3}) is substituted for the denominator of (\ref{eq:pro2}), we have a deterministic competitive ratio given as:
\begin{align}
\frac{OPT(\sigma)}{A(\sigma)}\leq ln(U/L)+1
\end{align}
\end{proof of Theorem*}
\enp

\bibliography{apomref_ETT}

\end{document}